\theoremstyle{remark}\newtheorem{remark}{Remark}[section]
\theoremstyle{plain}
	\newtheorem{theorem}[remark]{Theorem}
	\newtheorem{proposition}[remark]{Proposition}
	\newtheorem{lemma}[remark]{Lemma}
	\newtheorem{corollary}[remark]{Corollary}
\newcommand{\abs}[1]{\left\vert#1\right\vert}
\newcommand{\cG}{\mathcal{G}}
\newcommand{\cM}{\mathcal{M}}
\newcommand{\cP}{\mathcal{P}}
\newcommand{\thetaFEM}{\bar{\theta}_\textup{FEM}}
\newcommand{\thetaMC}{\bar{\theta}_\textup{MC}}
\newcommand{\norm}[2]{\left\Vert#1\right\Vert_{#2}}
\newcommand{\R}{\mathbb{R}}
\newcommand{\scrC}{\mathscr{C}}
\newcommand{\Z}{\mathbb{Z}}
\begin{document}

\title[Collision avoidance by sidestepping]{Kinetic description of collision avoidance in pedestrian crowds by sidestepping}

\author{Adriano Festa}
\address{RICAM, Austrian Academy of Sciences (\"OAW), Altenbergerstr. 69, 4040 Linz, Austria}
\email{a.festa@ricam.oeaw.ac.at}

\author{Andrea Tosin}
\address{Department of Mathematical Sciences ``G. L. Lagrange'', Politecnico di Torino, Corso Duca degli Abruzzi 24, 10129 Turin, Italy}
\email{andrea.tosin@polito.it}

\author{Marie-Therese Wolfram}
\address{Mathematical Institute, University of Warwick, CV4 7AL Coventry, UK and RICAM, Austrian Academy of Sciences (\"OAW), Altenbergerstr. 69, 4040 Linz, Austria}
\email{m.wolfram@warwick.ac.uk}

\subjclass[2010]{35Q20, 35Q70, 90B20, 91F99}

\keywords{Crowd dynamics, collision avoidance, Boltzmann-type kinetic model, mean-field approximation}

\begin{abstract}
In this paper we study a kinetic model for pedestrians, who are assumed to adapt their motion towards a desired direction while avoiding collisions with others by stepping aside. These minimal microscopic interaction rules lead to complex emergent macroscopic phenomena, such as velocity alignment in unidirectional flows and lane or stripe formation in bidirectional flows. We start by discussing collision avoidance mechanisms at the microscopic scale, then we study the corresponding Boltzmann-type kinetic description and its hydrodynamic mean-field approximation in the grazing collision limit. In the spatially homogeneous case we prove directional alignment under specific conditions on the sidestepping rules for both the collisional and the mean-field model. In the spatially inhomogeneous case we illustrate, by means of various numerical experiments, the rich dynamics that the proposed model is able to reproduce.
\end{abstract}

\maketitle

\section{Introduction}
The complex dynamical behaviour of large pedestrian crowds has always fascinated researchers from various scientific fields.  Academic studies began in earnest in the last century, starting with empirical observations in the early 1950’s and continuing with the development of models in the field of applied physics. In recent years, applied mathematicians have become increasingly interested in analytic aspects as well as computational challenges related to simulation and calibration. Modelling the intricate individual dynamics and understanding the emergence of complex macroscopic phenomena, such as the formation of directional lanes or collective patterns, is an active area of research nowadays.

Different classes of models have been proposed in the literature -- either at the micro-, meso- or macroscopic level. More recently also multiscale descriptions to model the impact of single individuals on the dynamics of a larger crowd have been discussed. Microscopic dynamics are based on the change of the individual position and possibly velocity due to interactions with others walkers and with the surrounding. The social force model~\cite{helbing1995PRE}, which is based on Newton's laws of motion, is among the most prominent and successful models in this class. But also cellular automata, see for example~\cite{burstedde2001PHYSA}, or stochastic optimal control approaches, such as the one in~\cite{hoogendoorn2004}, have been studied to model the individual behaviour. In kinetic models, the evolution of the crowd distribution with respect to the microscopic position and velocity of the pedestrians is described by a Boltzmann-type equation, in which interactions are included in the so-called ``collision kernel''; see for example~\cite{degond2013KRM}. Macroscopic PDEs describe instead the crowd as a continuum with density, of which they study the evolution in space and time, see for example \cite{borsche2015JNS,colombo2012M3AS,twarogowska2014AMM}. More recently multiscale approaches, which allow one to model the interactions of single individuals, for example leaders, with a large crowd have been proposed in~\cite{albi2016SIAP,colombi2016JCSMD,cristiani2011MMS}. For a detailed review of mathematical models of pedestrian dynamics we refer to the work by Cristiani and co-workers~\cite[Chapter 4]{cristiani2014BOOK}.

Experiments confirm that collision avoidance is one of the main driving forces in pedestrian dynamics, see~\cite{karamouzas2014PRL}. Individuals actively anticipate the behaviour of others and try to avoid collisions while maintaining their desired direction. The deviations from the desired direction result from stepping aside -- either to the right or the left. In a force based model, such as the social force model proposed by Helbing and co-workers, sidestepping can be included by a force which depends on the estimated collision time and acts perpendicular to the vector connecting the two involved individuals. In cellular automata approaches sidestepping in bidirectional flows can be accounted for by enhancing the transversal transition rates (with respect to the desired direction). Based on these different microscopic collision avoidance mechanisms, various meso- and macroscopic models have been proposed and studied in the literature. At the mesoscopic level these interactions often result in complex collisional operators, see for example the work of Degond and co-workers~\cite{degond2013KRM}. Burger et al.~\cite{burger2016SIMA} studied a minimal macroscopic model for bidirectional flows with sidestepping, which resulted in the segregation of the two flows into separate lanes.

In the present work we propose instead a minimal kinetic description, which is based on the assumptions that individuals try to avoid collisions while moving in their desired direction, see also~\cite{festa2015CDC}. We would like to mention that the collision avoidance plays an important role also in kinetic models of vehicular traffic. For detailed information we refer to~\cite{klar2000BOOKCH,piccoli2009ENCYCLOPEDIA}.

\medskip

This paper is organised as follows: in Section~\ref{sect:micro.coll} we discuss microscopic collision mechanisms. In Section~\ref{sect:kin_mod} we formulate a Boltzmann-type kinetic model for collision avoidance based on the minimal concepts of desired direction and sidestepping. In Section~\ref{sect:spat.homog} we identify conditions for directional alignment in the case of spatially homogeneous pedestrians for both the Boltzmann-type model and its hydrodynamic counterpart under the grazing collision limit. We also show various numerical experiments which confirm our theoretical findings. Finally, in Section~\ref{sect:spat.inhom} we illustrate the dynamics of the proposed model for more general problems under spatially inhomogeneous interaction rules.

\section{Microscopic collision mechanisms}
\label{sect:micro.coll}
We start by discussing various mechanisms of pedestrian collision at the microscopic scale. We consider a crowd of $N>1$ pedestrians, whose position and velocity are denoted by $x_i\in\R^2$ and $v_i\in\R^2$, $i = 1,\,\dots,\,N$, respectively. The dynamics of the pedestrians are driven by two goals: moving with a desired velocity or in a desired direction and, at the same time, trying to avoid collisions with others.

\begin{figure}[!t]
\begin{center}
\includegraphics[width=0.4\textwidth]{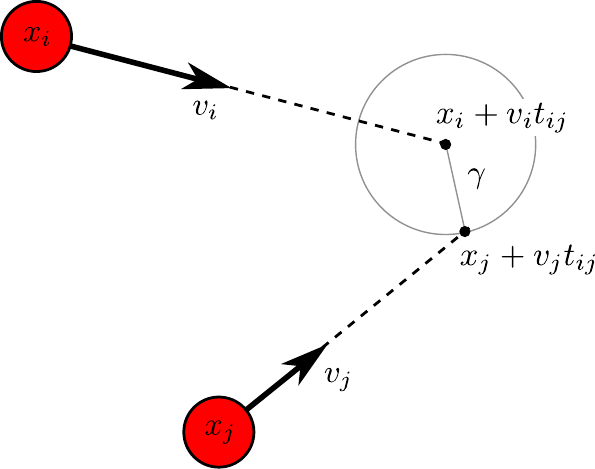}
\caption{Calculation of the time to collision.}
\label{fig:collision}
\end{center}
\end{figure}

We assume that the reaction of a pedestrian to nearby walkers depends on his/her prediction of the time before a possible collision~\cite{karamouzas2014PRL}. This time, the so-called \emph{time to collision}, can be estimated by extrapolating the pedestrian positions on the basis of the current velocities. Specifically, we imagine that the individuals continue to walk with their current velocity, hence along straight paths, until they collide. Furthermore, since we regard pedestrians as point particles, we assume that a collision between two of them, say $i$ and $j$, occurs once they get closer than a certain threshold $\gamma>0$ and we fix the time to collision $t=t_{ij}$ at the instant in which their distance is exactly $\gamma$, cf. Figure~\ref{fig:collision}. Therefore we obtain $t_{ij}$ by solving the equation
\begin{equation}
	\abs{x_i+tv_i-(x_j+tv_j)}^2=\gamma^2
	\label{eq:time.coll}
\end{equation}
with respect to $t$ with the constraints $t\in\R$, $t\geq 0$, which yields
\begin{equation*}
	t_{ij}=t(x_i,\,x_j,\,v_i,\,v_j):=
	\begin{cases}
		0 & \text{if } \abs{x_i-x_j}\leq\gamma \\
		-\dfrac{(x_i-x_j)\cdot(v_i-v_j)+\sqrt{\Delta}}{\abs{v_i-v_j}^2} & \text{if } \abs{x_i-x_j}>\gamma \text{ and~\eqref{eq:time.coll} has solutions} \\
		+\infty & \text{if } \abs{x_i-x_j}>\gamma \text{ and~\eqref{eq:time.coll} has no solution},
	\end{cases}
\end{equation*}
where $\cdot$ denotes the inner product in $\R^2$ while the discriminant $\Delta$ is
$$ \Delta:={[(x_i-x_j)\cdot(v_i-v_j)]}^2-\abs{v_i-v_j}^2\left(\abs{x_i-x_j}^2-\gamma^2\right). $$
Notice that in the case $\abs{x_i-x_j}>\gamma$ a future collision can occur, i.e.~\eqref{eq:time.coll} admits non-negative real solutions, only if the states $(x_i,\,v_i)$, $(x_j,\,v_j)$ of the interacting individuals satisfy
\begin{equation}
	\abs{x_i-x_j}^2-\frac{{[(x_i-x_j)\cdot (v_i-v_j)]}^2}{\abs{v_i-v_j}^2}\leq\gamma^2, \qquad (x_i-x_j)\cdot(v_i-v_j)<0.
	\label{collision}
\end{equation}
While the first condition guarantees $\Delta\geq 0$, hence that the solutions to~\eqref{eq:time.coll} are real, the second condition ensures that the solutions to~\eqref{eq:time.coll} are non-negative. In this case, the smaller one is retained as time to collision.

\begin{figure}[!t]
\begin{center}
\includegraphics[width=0.4\textwidth]{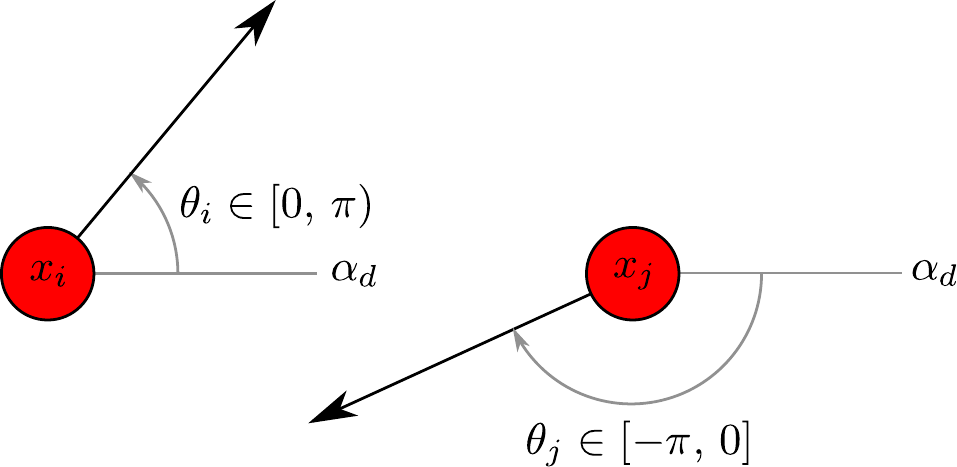}
\caption{The state variables describing pedestrians moving with constant speed with $\alpha_d=0$.}
\label{fig:points}
\end{center}
\end{figure}

In order to reduce the analytical and computational complexity of the model (considering that for two-dimensional position and velocity the dimension of the state space is $4$), it may be convenient to assume that all the pedestrians walk at a constant speed:
$$ \abs{v_i}=V_0>0 \qquad \forall\,i=1,\,\dots,\,N. $$
In particular, working with dimensionless variables, we set $V_0=1$ with reference to the standard walking speed of a pedestrian in normal conditions, i.e. $O(1~\text{m/s})$. This assumption implies:
$$ v_i=V_0(\cos{\theta_i}\mathbf{i}+\sin{\theta_i}\mathbf{j}), $$
where $\mathbf{i}$, $\mathbf{j}$ are the unit vectors of the coordinate axes in $\R^2$ and $\theta_i\in\R$ is the angle giving the orientation of the velocity $v_i$. Hence the microscopic state of the $i$th pedestrian is fully characterised by the pair $(x_i,\,\theta_i)\in\R^3$, see Figure~\ref{fig:points}, which reduces the dimension of the state space by one, cf.~\cite{agnelli2015M3AS}.

\begin{remark} \label{rem:alpha_d}
Because of the $2\pi$-periodicity of the orientation, the angle $\theta_i$ can actually be taken in any bounded interval $I\subset\R$ of length $2\pi$. Throughout this paper we set
$$ I:=[-\pi+\alpha_d,\,\pi+\alpha_d). $$
The interval $I$ is centred around a preferential angle $\alpha_d\in [-\pi,\,\pi)$, which corresponds to the \emph{desired direction}  of the pedestrians, cf. again Figure~\ref{fig:points}.
\end{remark}

In this reduced setting, the relevant expression of the time to collision, i.e. the one which applies when $\abs{x_i-x_j}>\gamma$ and~\eqref{eq:time.coll} has solutions, takes the form:
\begin{equation}
	t_{ij}=t(x_i,\,x_j,\,\theta_i,\,\theta_j):=
		-\dfrac{(x_i-x_j)\cdot\left[(\cos{\theta_i}-\cos{\theta_j})\mathbf{i}+(\sin{\theta_i}-\sin{\theta_j})\mathbf{j}\right]
			+\sqrt{\Delta}}{2V_0\left(1-\cos{(\theta_i+\theta_j)}\right)},
	\label{eq:tij.polar}
\end{equation}
where the discriminant is now given by
\begin{align}
	\begin{aligned}[t]
		\Delta &:= {\left\{(x_i-x_j)\cdot\left[(\cos{\theta_i}-\cos{\theta_j})\mathbf{i}+(\sin{\theta_i}-\sin{\theta_j})\mathbf{j}\right]\right\}}^2 \\
		&\phantom{:=} -2(1-\cos{(\theta_i+\theta_j)})\left(\abs{x_i-x_j}^2-\gamma^2\right).
	\end{aligned}
	\label{eq:Delta.polar}
\end{align}
	
In view of a statistical description of the ensemble of pedestrians, the time to collision can be used to define a \emph{probability of collision} $P$ between two individuals with states $(x_i,\,\theta_i)$ and $(x_j,\,\theta_j)$. For instance:
\begin{equation}
	P(x_i,\,x_j,\,\theta_i,\,\theta_j):=e^{-t_{ij}/\tau},
	\label{eq:prob.coll}
\end{equation}
where $\tau>0$ is a constant. Notice that $0<P\leq 1$ for $t_{ij}\geq 0$ as expected and furthermore $P\to 1$ when $t_{ij}\to 0^+$ while $P\to 0$ when $t_{ij}\to +\infty$. Often, however, the qualitative asymptotic trends of the collision dynamics are studied in \emph{spatially homogeneous conditions}, i.e. when pedestrians are so well mixed that their statistical distribution can be considered independent of the space variable. In this case, the function~\eqref{eq:prob.coll} has to be approximated by a function of the angles $\theta_i$, $\theta_j$ only. In this paper we consider:
\begin{equation}
	P(\theta_i,\,\theta_j)\propto\frac{1}{\pi}\min\{\abs{\theta_i-\theta_j},\,2\pi-\abs{\theta_i-\theta_j}\},
	\label{eq:prob.coll.spathomog}
\end{equation}
i.e. we basically assume that $P$ is proportional to the distance between the angles, which is linked to the incidence of the walking directions, taking into account the $2\pi$-periodicity of the orientation of the velocities in $\R^2$.

\begin{figure}[!t]
\begin{center}
\includegraphics[width=\textwidth]{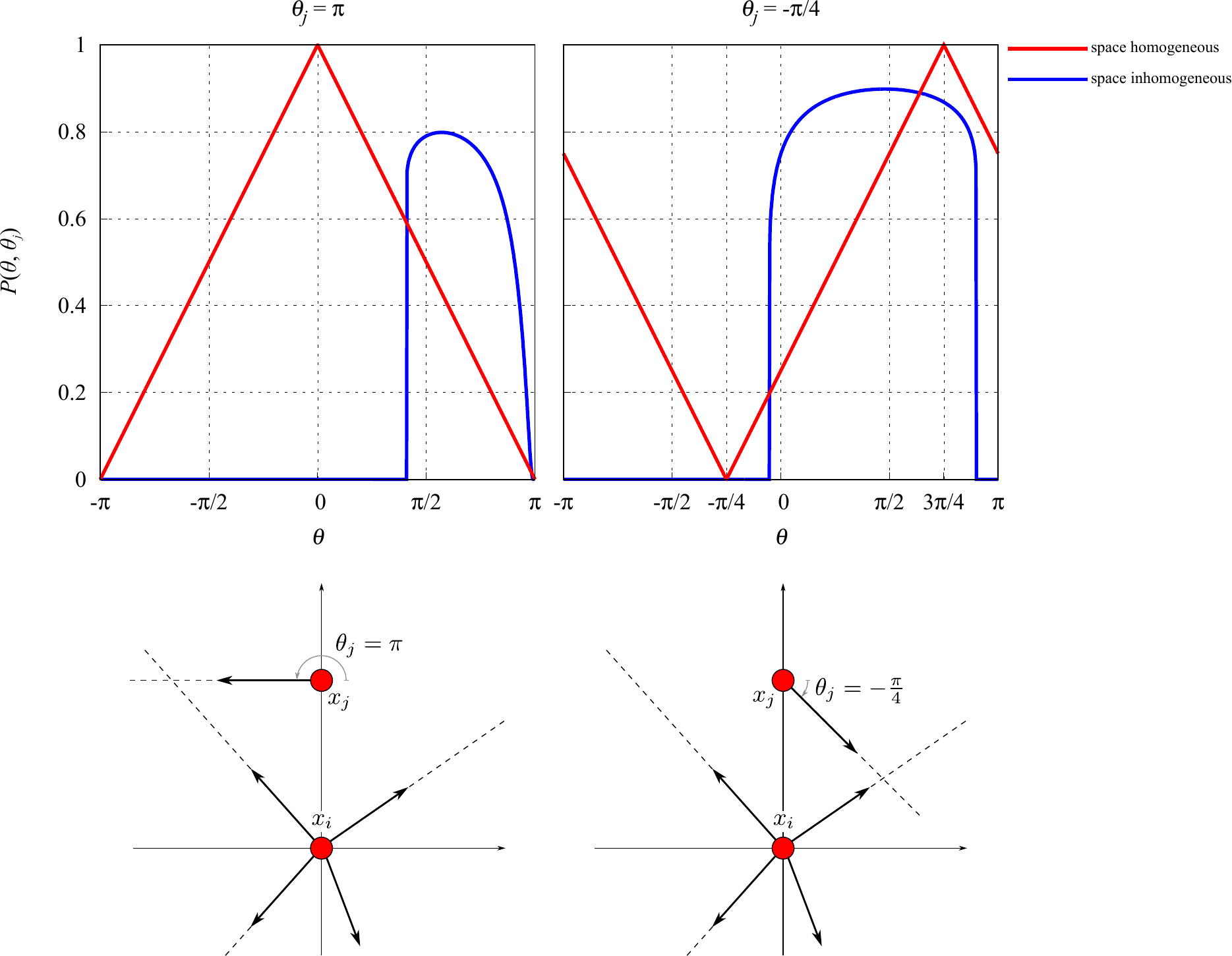}
\caption{Comparison between~\eqref{eq:prob.coll} (blue curves computed with $\gamma=0.8$, $\tau=1$) and~\eqref{eq:prob.coll.spathomog} (red curves). Left: $\theta_j=\pi$; right: $\theta_j=-\frac{\pi}{4}$ (see the text for details).}
\label{prob}
\end{center}
\end{figure}

In Figure~\ref{prob} we compare the functions~\eqref{eq:prob.coll},~\eqref{eq:prob.coll.spathomog} for two pedestrians located in $x_i=(0,\,0)$ and in $x_j=(0,\,1)$ with $\theta_i=\theta$ variable in the range $[-\pi,\,\pi)$ and either $\theta_j=\pi$ or $\theta_j=-\frac{\pi}{4}$. In both cases,~\eqref{eq:prob.coll.spathomog} can be seen as a piecewise linear approximation of~\eqref{eq:prob.coll} in the range of $\theta$ where conditions~\eqref{collision} are met. If instead $\theta$ is such that either $t_{ij}=0$ or $t_{ij}=+\infty$ then the spatially inhomogeneous and homogeneous probabilities of collision may differ consistently, because the spatially homogeneous model does not consider the relative position of the interacting pedestrians.

\section{Kinetic modelling}
\label{sect:kin_mod}
Based on the microscopic modelling discussed before we now frame some principles of collision avoidance in a kinetic context. This approach allows us to make the modelling more amenable to mathematical analysis and to the understanding of group-wise dynamics.

As already discussed, we assume that the pedestrians walk at a constant speed, hence we describe their microscopic state by means of their position in space $x\in\R^2$ and the orientation $\theta\in I$ of their velocity. We denote by $f=f(t,\,x,\,\theta):\R_+\times\R^2\times I\to\R_+$ the kinetic distribution function, which is such that $f(t,\,x,\,\theta)\,dx\,d\theta$ is the fraction of pedestrians who at time $t$ are in the infinitesimal volume $dx$ centred in $x$ with an orientation in $[\theta,\,\theta+d\theta]$. We assume that $f$ obeys the following Boltzmann-type equation:
\begin{equation}
	\partial_t f(t,\,x,\,\theta)+v\cdot\nabla_xf(t,\,x,\,\theta)=Q(f,\,f)(t,\,x,\,\theta),
	\label{eq:Boltz}
\end{equation}
where the collisional term at the right-hand side describes binary interactions between pairs of pedestrians. Explicitly:
\begin{equation}
	Q(f,\,f)(t,\,x,\,\theta)=\int_I\int_{\R^2}\left(\frac{1}{\abs{J}}f(t,\,x,\,\theta_\ast)f(t,\,y,\,\varphi_\ast)
		-f(t,\,x,\,\theta)f(t,\,y,\,\varphi)\right)\,dy\,d\varphi,
	\label{eq:Q}
\end{equation}
with $(x,\,\theta_\ast),\,(y,\,\varphi_\ast)$ and $(x,\,\theta),\,(y,\,\varphi)$ representing the pre- and post-interaction states, respectively, and $\abs{J}$ the determinant of the Jacobian $J$ of the transformation from the former to the latter. Notice that binary interactions are assumed to modify only the orientation of the velocity but not the position of the interacting individuals, which instead changes because of the transport term at the left-hand side of~\eqref{eq:Boltz}.

In order to avoid the computation of the Jacobian $J$ in $Q$ it is customary to use the weak formulation of~\eqref{eq:Boltz}, which is formally obtained by multiplying the equation by a sufficiently smooth and $x$-compactly supported test function $\psi:\R^2\times I\to\R$ and integrating with over $x,$ and $\theta$:
\begin{multline}
	\frac{d}{dt}\int_I\int_{\R^2}\psi(x,\,\theta)f(t,\,x,\,\theta)\,dx\,d\theta
		+\int_I\int_{\R^2}v\cdot\nabla_x\psi(x,\,\theta)f(t,\,x,\,\theta)\,dx \,d\theta= \\
	\int_I\int_I\int_{\R^2}\int_{\R^2}\left(\psi(x,\,\theta)
		-\psi(x,\theta_\ast)\right)f(t,\,x,\,\theta_\ast)f(t,\,y,\,\varphi_\ast)\,dx\,dy\,d\theta_\ast\,d\varphi_\ast.
	\label{eq:Boltz.weak}
\end{multline}
This form is easier to handle and can provide information about the evolution of some macroscopic quantities of the system.

To characterise the collisional term we propose an interaction rule of the form
\begin{equation*}
	\begin{cases}
		\theta=\scrC(x,\,y,\,\theta_\ast,\,\varphi_\ast)+2h\pi \\
		\varphi=\scrC(y,\,x,\,\varphi_\ast,\,\theta_\ast)+2k\pi,
	\end{cases}
	\quad h,\,k\in\Z
\end{equation*}
with
\begin{equation}
	\scrC(x,\,y,\,\theta_\ast,\,\varphi_\ast):=\theta_\ast+(1-P(x,\,y,\,\theta_\ast,\,\varphi_\ast))(\alpha_d-\theta_\ast)
		+P(x,\,y,\,\theta_\ast,\,\varphi_\ast)\alpha_c,
	\label{eq:int.rules}
\end{equation}
where $h,\,k\in\Z$ are chosen to ensure that $\theta,\,\varphi\in I$. Therefore they depend in general on $\theta_\ast,\,\varphi_\ast\in I$. In~\eqref{eq:int.rules}, $\alpha_d$ is the desired angle introduced in Remark~\ref{rem:alpha_d} whereas the \emph{deviation angle} $\alpha_c\in [-\pi,\,\pi)$ corresponds to the lateral displacement of the pedestrians, when they step aside to avoid collisions. A pedestrian steps to the left when  $\alpha_c>0$ and to the right when $\alpha_c<0$ (with respect to their current direction of motion). Finally, the term $P(x,\,y,\,\theta_\ast,\,\varphi_\ast)$ is the probability of collision introduced in Section~\ref{sect:micro.coll}.

\section{The spatially homogeneous problem}
\label{sect:spat.homog}
In the space homogeneous case pedestrians are assumed to be well mixed so that their distribution function $f$ is constant in $x$, i.e. $f=f(t,\,\theta)$. In other words, the statistical distribution of the angle $\theta$ is the same at every point $x$.

The weak formulation~\eqref{eq:Boltz.weak} of the kinetic equation then becomes:
\begin{equation}
	\frac{d}{dt}\int_I\psi(\theta)f(t,\,\theta)\,d\theta=
		\int_I\int_I\left(\psi(\theta)-\psi(\theta_\ast)\right)f(t,\,\theta_\ast)
			f(t,\,\varphi_\ast)\,d\theta_\ast d\varphi_\ast.
	\label{eq:Boltz.weak.spat_homog}
\end{equation}
We point out that, in contrast to~\eqref{eq:Boltz.weak}, here the test function $\psi:I\to\R$ does not have to be either smooth or compactly supported. Furthermore, in the following we assume that the test functions are extended to $\mathbb{R}$ by $2\pi$-periodicity, so as to avoid unneccessary technicalities caused  the periodic boundary conditions. In particular, taking the constant test function $\psi\equiv 1$ we see that the \emph{macroscopic density}
$$ \rho(t):=\int_If(t,\,\theta)\,d\theta $$
is constant in time because the right-hand side of~\eqref{eq:Boltz.weak.spat_homog} vanishes. This means that the crowd density can be regarded as a parameter, say $\rho$, of the model fixed by the initial condition.

By dropping the transport term $v\cdot\nabla_x{f}$ in the equation, the assumption of space homogeneity allows one to study the genuine effect of the microscopic interactions, which are now expressed as
$$ \theta=\scrC(\theta_\ast,\,\varphi_\ast)+2h\pi, \quad h\in\Z $$
with
\begin{equation}
	\scrC(\theta_\ast,\,\varphi_\ast)=\theta_\ast+(1-P(\theta_\ast,\,\varphi_\ast))(\alpha_d-\theta_\ast)
		+P(\theta_\ast,\,\varphi_\ast)\alpha_c
	\label{eq:int.rules.spathomog}
\end{equation}
and $P$ given by~\eqref{eq:prob.coll.spathomog}. In more detail, we assume
\begin{equation}
	P(\theta_\ast,\,\varphi_\ast):=a(\rho)\cG(\abs{\theta_\ast-\varphi_\ast}),
		\qquad \cG(s):=\frac{1}{\pi}\min\{s,\,2\pi-s\}
	\label{eq:P.space_homog}
\end{equation}
where $\cG:[0,\,2\pi)\to [0,\,1]$ corresponds to a function which belongs to ~\eqref{eq:prob.coll.spathomog}, while $a(\rho)\in [0,\,1]$ models the influence of the collective state of the crowd on the individual interaction rules. Considering a dimensionless $\rho\in [0,\,1]$ referred to a typical congestion density $O(4~\text{ped/m\textsuperscript{2}})$, cf.~\cite{polus1983JTE,weidmann1992TECHREP}, we suggest:
\begin{enumerate}[(i)]
\item $a(\rho)\propto\rho$ to model a probability of collision proportional to the number of pedestrians in the crowd;
\item $a(\rho)\propto\rho(1-\rho)$ to account for a small probability of collision in the case of spread out groups ($\rho\approx 0$) and in very crowded situations ($\rho\approx 1$), when pedestrians tend to be passively dragged by the flow. The highest probability of collision arises at intermediate congestion levels ($\rho\approx\frac{1}{2}$).
\end{enumerate}

\subsection{Asymptotic alignment under binary interactions}
\label{sect:asympt.align_binary}
Now we investigate the conditions under which interactions lead pedestrians to move in the same direction. We call this \emph{alignment}, which is a form of \emph{consensus}~\cite{albi2014PTRSA,canuto2012SICON,motsch2014SIREV}. We say that there is alignment when the kinetic distribution $f$ converges asymptotically in time to a distribution of the form $\rho\delta_{\alpha}$ for some $\alpha\in I$, where $\delta_\alpha$ is the Dirac delta centred at $\theta=\alpha$. In fact this means that, in the long run, the mass concentrates at $\theta=\alpha$.

We shall use the following notations and definitions:
\begin{itemize}
\item Let $\cM^\rho_+(I)$ denote the set of the positive measures on $I$ with total mass $\rho\geq 0$;
\item Let $W_1(\mu,\,\nu)$ denote the \emph{$1$-Wasserstein distance}~\cite{ambrosio2008BOOK} for $\mu,\,\nu\in\cM^\rho_+(I)$, given by
$$ W_1(\mu,\,\nu):=\inf_{\lambda\in\Pi(\mu,\,\nu)}\iint_{I^2}\vert\theta-\varphi\vert\,d\lambda(\theta,\,\varphi), $$
where $\Pi(\mu,\,\nu)$ is the set of the transference plans from $\mu$ to $\nu$, i.e. the measures on $I^2$ with marginals $\mu$ and $\nu$.
\end{itemize}

First of all, we show that $\alpha_d$ is the only possible direction of alignment.
\begin{proposition} \label{prop:only_alphad.Boltz}
A distribution of the form $\rho\delta_\alpha\in\cM^\rho_+(I)$, with $\rho>0$ and $\alpha\in I$, is a steady solution to~\eqref{eq:Boltz.weak.spat_homog} with interaction rule~\eqref{eq:int.rules.spathomog} and interaction probability~\eqref{eq:P.space_homog} if and only if $\alpha=\alpha_d$.
\end{proposition}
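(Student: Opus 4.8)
The plan is to test the weak equation~\eqref{eq:Boltz.weak.spat_homog} on the candidate steady state $f = \rho\delta_\alpha$ with a carefully chosen family of test functions and read off the constraint on $\alpha$. First I would observe that if $f = \rho\delta_\alpha$, then both $\theta_\ast$ and $\varphi_\ast$ are forced to equal $\alpha$ in the double integral, so the argument of $\scrC$ is always $(\alpha,\alpha)$. Since $\cG(\abs{\alpha-\alpha}) = \cG(0) = 0$, we get $P(\alpha,\alpha) = a(\rho)\cG(0) = 0$, and hence by~\eqref{eq:int.rules.spathomog} the post-interaction angle is $\scrC(\alpha,\alpha) = \alpha + (1-0)(\alpha_d - \alpha) + 0\cdot\alpha_c = \alpha_d$ (up to the $2\pi k$ shift, which is irrelevant because test functions are extended $2\pi$-periodically). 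Therefore the right-hand side of~\eqref{eq:Boltz.weak.spat_homog} collapses to $\rho^2\bigl(\psi(\alpha) - \psi(\alpha_d)\bigr)$.

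Next I would impose that this vanishes for all admissible test functions $\psi$. Taking, for instance, $\psi(\theta) = \abs{\theta - \alpha_d}$ extended $2\pi$-periodically (or any continuous $\psi$ that vanishes at $\alpha_d$ but not at $\alpha$) immediately yields $\psi(\alpha) = 0$, forcing $\alpha = \alpha_d$ in $I$. This gives the ``only if'' direction. For the ``if'' direction, one simply substitutes $\alpha = \alpha_d$ back: then $\psi(\alpha) - \psi(\alpha_d) = 0$ for every $\psi$, so the right-hand side of~\eqref{eq:Boltz.weak.spat_homog} vanishes identically and $\frac{d}{dt}\int_I \psi\, f\,d\theta = 0$, i.e.\ $\rho\delta_{\alpha_d}$ is indeed stationary. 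The left-hand side is automatically handled since there is no transport term in the spatially homogeneous equation.

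The only delicate point is making the manipulation of the Dirac measure rigorous: $\rho\delta_\alpha$ is a measure, not a function, so one should interpret the products $f(t,\theta_\ast)f(t,\varphi_\ast)\,d\theta_\ast\,d\varphi_\ast$ as the product measure $\rho^2(\delta_\alpha\otimes\delta_\alpha)$ on $I^2$, and note that $(\theta_\ast,\varphi_\ast)\mapsto\psi(\theta) - \psi(\theta_\ast) = \psi(\scrC(\theta_\ast,\varphi_\ast)) - \psi(\theta_\ast)$ is a bounded Borel function so the integral is well defined and equals the integrand evaluated at $(\alpha,\alpha)$. I expect this measure-theoretic bookkeeping, rather than any genuine analytic difficulty, to be the main thing to get right; the computation itself hinges entirely on the single fact $\cG(0) = 0$, which makes the collision probability vanish for perfectly aligned pedestrians and thus leaves only the relaxation-toward-$\alpha_d$ term active.
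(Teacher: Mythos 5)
Your argument is correct and is essentially the paper's own proof: plug $\rho\delta_\alpha$ into the weak form, use $P(\alpha,\alpha)=a(\rho)\cG(0)=0$ to get $\scrC(\alpha,\alpha)=\alpha_d$, and conclude $\psi(\alpha_d)-\psi(\alpha)=0$ for all test functions, forcing $\alpha=\alpha_d$ (note only that the right-hand side should read $\rho^2\bigl(\psi(\alpha_d)-\psi(\alpha)\bigr)$, post- minus pre-interaction, a sign slip that does not affect the conclusion). Your additional remarks on interpreting the product as $\rho^2(\delta_\alpha\otimes\delta_\alpha)$ and on exhibiting a concrete test function such as $\abs{\theta-\alpha_d}$ are sound and merely make explicit what the paper leaves implicit.
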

\begin{proof}
Plugging $\rho\delta_\alpha$ into~\eqref{eq:Boltz.weak.spat_homog} and invoking the $2\pi$-periodicity of $\psi$ gives
$$ \psi(\scrC(\alpha,\,\alpha))-\psi(\alpha)=0, $$
that is, using~\eqref{eq:int.rules.spathomog} and considering from~\eqref{eq:P.space_homog} that $P(\alpha,\,\alpha)=0$,
$$ \psi(\alpha_d)-\psi(\alpha)=0. $$
This equation has to hold for all test functions $\psi$, therefore $\alpha=\alpha_d$.
\end{proof}

Next we study under which conditions the system aligns in the desired direction $\alpha_d$ starting with a possibly generic initial distribution $f_0\in\cM^\rho_+(I)$.

\begin{theorem} \label{theo:align.Boltz}
Let $\rho>0$ and $P$ be given by~\eqref{eq:P.space_homog}. Define moreover
$$ \bar{\theta}_0:=\frac{1}{\rho}\int_I\abs{\theta-\alpha_d}f_0(\theta)\,d\theta, $$
where $f_0\in\cM^\rho_+(I)$ is a prescribed initial distribution function.

If
\begin{equation}
	\bar{\theta}_0<\pi\left(\frac{1}{a(\rho)}-1\right) \quad \text{and} \quad
		\abs{\alpha_c}\leq\frac{\pi}{2}\left(\frac{1}{a(\rho)}-1\right)-\frac{\bar{\theta}_0}{2}
	\label{eq:cond.align.Boltz}
\end{equation}
then 
$$ \lim_{t\to+\infty}W_1(f(t),\,\rho\delta_{\alpha_d})=0. $$
\end{theorem}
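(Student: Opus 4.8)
Throughout write $\bar\theta(t):=\frac{1}{\rho}\int_I\abs{\theta-\alpha_d}f(t,\theta)\,d\theta$. Since the only transference plan from $f(t,\cdot)$ to $\rho\delta_{\alpha_d}$ is the product $f(t,\cdot)\otimes\delta_{\alpha_d}$, one has $W_1(f(t),\rho\delta_{\alpha_d})=\int_I\abs{\theta-\alpha_d}f(t,\theta)\,d\theta=\rho\bar\theta(t)$, so it suffices to prove $\bar\theta(t)\to 0$; we may assume $\bar\theta_0>0$, otherwise $f_0=\rho\delta_{\alpha_d}$ and there is nothing to show. I would insert the (Lipschitz, hence admissible) $2\pi$-periodic test function $\psi(\theta)=\abs{\theta-\alpha_d}$ into the spatially homogeneous weak formulation~\eqref{eq:Boltz.weak.spat_homog}, obtaining
\[
\frac{d}{dt}\bigl(\rho\bar\theta(t)\bigr)=\int_I\int_I\bigl(\abs{\theta-\alpha_d}-\abs{\theta_\ast-\alpha_d}\bigr)f(t,\theta_\ast)f(t,\varphi_\ast)\,d\theta_\ast\,d\varphi_\ast,\qquad\theta=\scrC(\theta_\ast,\varphi_\ast)+2h\pi.
\]

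The core is a pointwise bound on the post-interaction deviation. Formula~\eqref{eq:int.rules.spathomog} gives the clean identity $\scrC(\theta_\ast,\varphi_\ast)-\alpha_d=P(\theta_\ast,\varphi_\ast)\bigl[(\theta_\ast-\alpha_d)+\alpha_c\bigr]$, and the periodic shift $2h\pi$ can only decrease the modulus (if a shift is actually needed then $\abs{\scrC(\theta_\ast,\varphi_\ast)-\alpha_d}\ge\pi\ge\abs{\theta-\alpha_d}$), so $\abs{\theta-\alpha_d}\le P(\theta_\ast,\varphi_\ast)\,(\abs{\theta_\ast-\alpha_d}+\abs{\alpha_c})$. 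The crucial step is now to control $P$ by the angular spread rather than by a constant: from~\eqref{eq:P.space_homog}, $\cG(s)\le s/\pi$ and the triangle inequality, $P(\theta_\ast,\varphi_\ast)\le\frac{a(\rho)}{\pi}(\abs{\theta_\ast-\alpha_d}+\abs{\varphi_\ast-\alpha_d})$. Multiplying the two bounds, expanding, and integrating against $f(t,\theta_\ast)f(t,\varphi_\ast)$ — using $\int_I\abs{\theta_\ast-\alpha_d}f\,d\theta_\ast=\rho\bar\theta$, $\int_I f\,d\theta_\ast=\rho$, and the crude bound $\int_I\abs{\theta_\ast-\alpha_d}^2f\,d\theta_\ast\le\pi\int_I\abs{\theta_\ast-\alpha_d}f\,d\theta_\ast=\pi\rho\bar\theta$ (valid because $\abs{\theta_\ast-\alpha_d}\le\pi$ on $I$) — I expect to arrive at the closed scalar inequality
\[
\dot{\bar\theta}(t)\le\rho\,\bar\theta(t)\left(a(\rho)-1+\frac{a(\rho)\bigl(2\abs{\alpha_c}+\bar\theta(t)\bigr)}{\pi}\right).
\]

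The bracket is affine and strictly increasing in $\bar\theta$, and hypotheses~\eqref{eq:cond.align.Boltz} are tailored exactly so that it is $\le 0$ at $\bar\theta=\bar\theta_0$ (rearrange the bound on $\abs{\alpha_c}$, which yields $2\abs{\alpha_c}+\bar\theta_0\le\pi(1/a(\rho)-1)$) and strictly negative at $\bar\theta=0$ (this uses $\bar\theta_0>0$ and the first inequality, and in particular forces $a(\rho)<1$). Hence $\bar\theta$ is non-increasing and confined to $[0,\bar\theta_0]$, where the bracket does not exceed its value at $\bar\theta_0$; if that value is strictly negative, Grönwall's lemma gives exponential decay $\bar\theta(t)\le\bar\theta_0e^{-ct}\to 0$ for some $c>0$, whence $W_1(f(t),\rho\delta_{\alpha_d})=\rho\bar\theta(t)\to 0$, while the borderline equality case is recovered as soon as $\bar\theta$ drops strictly below $\bar\theta_0$, where the bracket becomes strictly negative. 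The part I expect to require the most care is the probability estimate: replacing it by the naive $P\le a(\rho)$ would only yield convergence of $f$ into a ball of radius $\sim a(\rho)\abs{\alpha_c}/(1-a(\rho))$ about $\alpha_d$, so it is essential to exploit that $P$ vanishes with the spread $\abs{\theta_\ast-\varphi_\ast}$ as the distribution concentrates; a secondary technical point is the careful bookkeeping of the periodic wrapping $2h\pi$ and the verification that $\rho\bar\theta(t)=W_1(f(t),\rho\delta_{\alpha_d})$ along the whole evolution.
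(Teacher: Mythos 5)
Your proposal reproduces the paper's proof essentially step for step: the same test function $\vert\theta-\alpha_d\vert$ (up to the factor $1/\rho$), the same reduction of $W_1$ to $\bar\theta$, the same observation that the $2h\pi$ wrap can only shrink $\vert\theta-\alpha_d\vert$, and the same chain of bounds ($\cG(s)\le s/\pi$, the triangle inequality $\vert\theta_\ast-\varphi_\ast\vert\le\vert\theta_\ast-\alpha_d\vert+\vert\varphi_\ast-\alpha_d\vert$, and $\vert\theta_\ast-\alpha_d\vert^2\le\pi\vert\theta_\ast-\alpha_d\vert$), leading to the identical Riccati-type inequality $\dot{\bar\theta}\le\rho\bar\theta(\lambda+\mu\bar\theta)$ with $\lambda=a(\rho)(1+2\vert\alpha_c\vert/\pi)-1$ and $\mu=a(\rho)/\pi$. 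The only differences are cosmetic: the paper integrates this Bernoulli-type inequality explicitly via $w=1/\bar\theta$ instead of your invariant-region-plus-Gr\"onwall argument, and your admittedly informal handling of the borderline case $\lambda+\mu\bar\theta_0=0$ (where the differential inequality alone does not force $\bar\theta$ strictly below $\bar\theta_0$) mirrors an analogous looseness in the paper's own conclusion, whose bracket tends to the finite value $1/\bar\theta_0$ rather than $+\infty$ in that exact case.
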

\begin{proof}
We define for $\rho>0$ and for every $t>0$ the average distance of the crowd from the desired direction 
\begin{equation}
	\bar{\theta}(t):=\frac{1}{\rho}\int_I\abs{\theta-\alpha_d}f(t,\,\theta)\,d\theta.
	\label{eq:bartheta}
\end{equation}
Proving that $\lim\limits_{t\to+\infty}\bar{\theta}(t)=0$ implies the thesis, because $f(t,\,\theta)\otimes\rho\delta_{\alpha_d}(\varphi)$ is a possible transference plan in $\Pi(f(t),\,\rho\delta_{\alpha_d})$ and therefore
\begin{align*}
	W_1(f(t),\,\rho\delta_{\alpha_d}) &\leq \iint_{I^2}\vert\theta-\varphi\vert\,d(f(t,\,\theta)\otimes\rho\delta_{\alpha_d}(\varphi)) \\
	& =\rho\int_I\abs{\theta-\alpha_d}f(t,\,\theta)\,d\theta=\rho^2\bar{\theta}(t)\xrightarrow{t\to+\infty}0.
\end{align*}

Choosing $\psi(\theta)=\frac{1}{\rho}\abs{\theta-\alpha_d}$ as test function in~\eqref{eq:Boltz.weak.spat_homog} yields
\begin{equation}
	\frac{d}{dt}\bar{\theta}(t)=\frac{1}{\rho}\int_I\int_I\left(\abs{\theta-\alpha_d}
		-\abs{\theta_\ast-\alpha_d}\right)f(t,\,\theta_\ast)f(t,\,\varphi_\ast)\,d\theta_\ast\,d\varphi_\ast.
	\label{eq:weak.proof.onepop}
\end{equation}
In particular, from the interaction rules~\eqref{eq:int.rules} we have 
$$ \abs{\theta-\alpha_d}=\abs{\scrC(\theta_\ast,\,\varphi_\ast)+2h\pi-\alpha_d}\leq\abs{\scrC(\theta_\ast,\,\varphi_\ast)-\alpha_d}. $$
In fact the value of $h\in\Z$ is chosen such that $\scrC(\theta_\ast,\,\varphi_\ast)+2h\pi\in I$, whereas in general one may have $\scrC(\theta_\ast,\,\varphi_\ast)\not\in I$. In view of this and using the interaction probability~\eqref{eq:P.space_homog} together with the triangular inequality we obtain
\begin{align*}
	\abs{\theta-\alpha_d}-\abs{\theta_\ast-\alpha_d} &\leq (P-1)\abs{\theta_\ast-\alpha_d}+P\abs{\alpha_c} \\
	&= \bigl[a(\rho)\cG(\abs{\theta_\ast-\varphi_\ast})-1\bigr]\abs{\theta_\ast-\alpha_d}
		+a(\rho)\cG(\abs{\theta_\ast-\varphi_\ast})\abs{\alpha_c} \\
	&\leq \left[\frac{a(\rho)}{\pi}\abs{\theta_\ast-\varphi_\ast}-1\right]\abs{\theta_\ast-\alpha_d}
		+\frac{a(\rho)}{\pi}\abs{\alpha_c}\abs{\theta_\ast-\varphi_\ast}
\intertext{where we have used that $\cG(\abs{\theta_\ast-\varphi_\ast})\leq\frac{1}{\pi}\abs{\theta_\ast-\varphi_\ast}$, see \eqref{eq:P.space_homog}. Moreover we have that $\abs{\theta_\ast-\varphi_\ast}\leq\abs{\theta_\ast-\alpha_d}+\abs{\varphi_\ast-\alpha_d}$, thus}
	\abs{\theta-\alpha_d}-\abs{\theta_\ast-\alpha_d}&\leq \frac{a(\rho)}{\pi}\abs{\theta_\ast-\alpha_d}^2+\frac{a(\rho)}{\pi}\abs{\theta_\ast-\alpha_d}\cdot\abs{\varphi_\ast-\alpha_d}
		-\abs{\theta_\ast-\alpha_d} \\
	&\phantom{\leq} +\frac{a(\rho)}{\pi}\abs{\alpha_c}\left(\abs{\theta_\ast-\alpha_d}+\abs{\varphi_\ast-\alpha_d}\right).
\intertext{Finally, $\abs{\theta_\ast-\alpha_d}\leq\pi$ implies $\abs{\theta_\ast-\alpha_d}^2\leq\pi\abs{\theta_\ast-\alpha_d}$, hence}
	\abs{\theta-\alpha_d}-\abs{\theta_\ast-\alpha_d}&\leq a(\rho)\abs{\theta_\ast-\alpha_d}+\frac{a(\rho)}{\pi}\abs{\theta_\ast-\alpha_d}\cdot\abs{\varphi_\ast-\alpha_d}
		-\abs{\theta_\ast-\alpha_d} \\
	&\phantom{\leq} +\frac{a(\rho)}{\pi}\abs{\alpha_c}\left(\abs{\theta_\ast-\alpha_d}+\abs{\varphi_\ast-\alpha_d}\right).
\end{align*}

Plugging this into~\eqref{eq:weak.proof.onepop} and using the definition of $\bar{\theta}$ we get
$$ \frac{d\bar{\theta}}{dt}\leq\lambda\rho\bar{\theta}+\mu\rho\bar{\theta}^2, $$
where
\begin{equation}
	\lambda:=a(\rho)\left(1+\frac{2\abs{\alpha_c}}{\pi}\right)-1, \qquad
		\mu:=\frac{a(\rho)}{\pi}.
	\label{eq:lambda.mu}
\end{equation}
In order to integrate the previous differential inequality we define $w(t):=\bar{\theta}^{-1}(t)$, whence
$$ \frac{dw}{dt}\geq -\lambda\rho w-\mu\rho, $$
which yields $w(t)\geq\bar{\theta}_0^{-1}e^{-\lambda\rho t}+\frac{\mu}{\lambda}\left(e^{-\lambda\rho t}-1\right)$ for $\bar{\theta}_0=\bar{\theta}(0)=w^{-1}(0)$. Going back to $\bar{\theta}$ we deduce that
\begin{equation}
	\bar{\theta}(t)\left[\frac{e^{-\lambda\rho t}}{\bar{\theta}_0}+\frac{\mu}{\lambda}\left(e^{-\lambda\rho t}-1\right)\right]\leq 1,
		\quad \forall\,t\geq 0.
	\label{eq:proof.theta.convergence}
\end{equation}
	
From~\eqref{eq:cond.align.Boltz} we have
$$ \lambda\leq-\mu\bar{\theta}_0 \quad \text{and} \quad
	\frac{e^{-\lambda\rho t}}{\bar{\theta}_0}+\frac{\mu}{\lambda}\left(e^{-\lambda\rho t}-1\right)\geq\frac{1}{\bar{\theta}_0}, $$
consequently, if $a(\rho),\,\bar{\theta}_0>0$,
$$ \lim_{t\to+\infty}\left[\frac{e^{-\lambda\rho t}}{\bar{\theta}_0}+\frac{\mu}{\lambda}\left(e^{-\lambda\rho t}-1\right)\right]=+\infty, $$
which, in view of~\eqref{eq:proof.theta.convergence}, implies that $\bar{\theta}(t)\to 0$ for $t\to+\infty$.

In order to complete the proof we examine the cases $a(\rho)=0$, $\bar{\theta}_0=0$. In the former we have $\mu=0$ and $\lambda=-1$, thus from~\eqref{eq:proof.theta.convergence} we deduce $\bar{\theta}(t)\leq\bar{\theta}_0e^{-\rho t}\to 0$ for $t\to+\infty$. In the latter we deduce instead $\bar{\theta}(t)=0$, i.e. $f(t,\,\theta)=\rho\delta_{\alpha_d}(\theta)$, for all $t\geq 0$, which concludes the proof.
\end{proof}
\begin{remark}
From the proof of Theorem~\ref{theo:align.Boltz}, if the conditions~\eqref{eq:cond.align.Boltz} are met we infer the following estimate for $\bar{\theta}$:
\begin{equation}
	\bar{\theta}(t)\leq \frac{\abs{\lambda}\bar{\theta}_0}{\left(\abs{\lambda}-\mu\bar{\theta}_0\right)e^{\abs{\lambda}\rho t}+\mu\bar{\theta}_0},
		\quad \forall\,t\geq 0
	\label{eq:estimate.bartheta}
\end{equation}
where $\lambda,\,\mu$ are given in~\eqref{eq:lambda.mu}. In Section~\ref{sect:spat.homog.Boltz_numsim} we compare this estimate with the numerical values of $\bar{\theta}(t)$ computed by simulating the kinetic model with a Monte Carlo algorithm.
\end{remark}

\subsection{Numerical simulations}
\label{sect:spat.homog.Boltz_numsim}
Next we confirm the theoretical results presented in Section~\ref{sect:asympt.align_binary} with numerical simulations. We choose $a(\rho)=\rho$ and  $\alpha_d=0$ in the following, therefore $I=[-\pi,\,\pi)$. We solve~\eqref{eq:Boltz.weak.spat_homog} by the Nanbu-like algorithm, which is an implementation of the Monte Carlo (MC) method for kinetic equations (see Appendix~\ref{app:nanbu} and~\cite[Chapter 4]{pareschi2013BOOK}), performing $4$ runs per test with $N=5\cdot 10^5$ particles in each run.

\begin{figure}[!t]
\begin{center}
\includegraphics[width=0.4\textwidth]{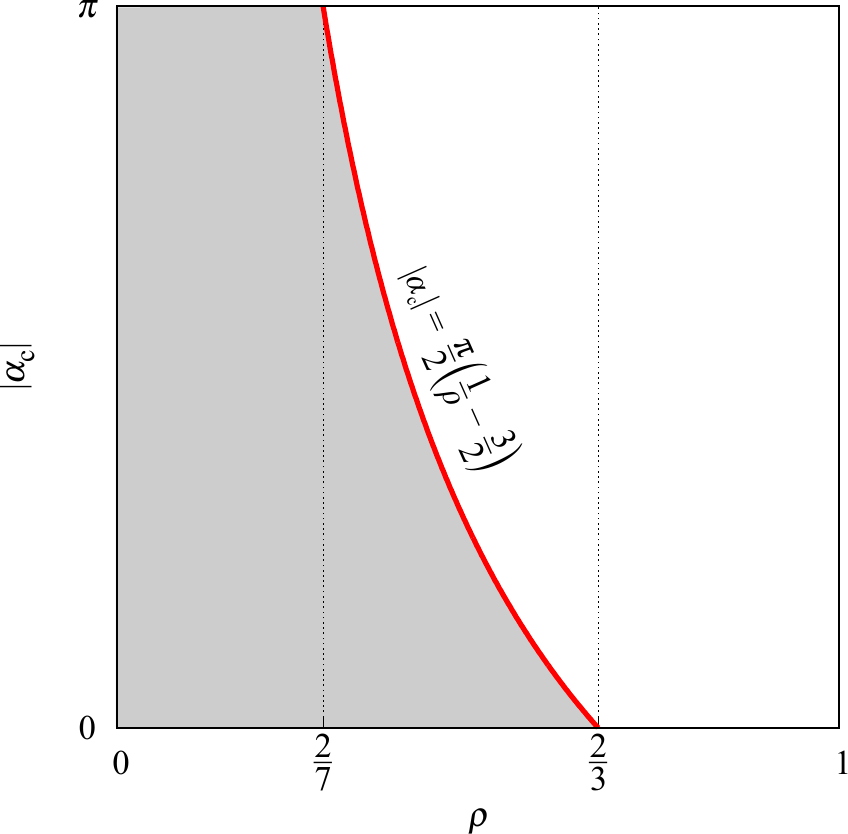}
\caption{The set of pairs $(\rho,\,\abs{\alpha_c})\in [0,\,1]\times [0,\,\pi]$ (shaded area) for which Theorem~\ref{theo:align.Boltz} guarantees alignment in the case study of Section~\ref{sect:spat.homog.Boltz_numsim}.}
\label{fig:alpha_c.binary}
\end{center}
\end{figure}

As initial condition we choose the uniform distribution over all possible directions with mass $\rho$:
\begin{equation}
	f_0(\theta)=\frac{\rho}{2\pi}, \quad \theta\in I,
	\label{eq:f0.uniform}
\end{equation}
which implies $\bar{\theta}_0=\frac{\pi}{2}$. Conditions~\eqref{eq:cond.align.Boltz} of Theorem~\ref{theo:align.Boltz} read then
\begin{equation}
	0\leq\rho<\frac{2}{3}, \qquad \abs{\alpha_c}\leq\frac{\pi}{2}\left(\frac{1}{\rho}-\frac{3}{2}\right).
	\label{eq:cond.align.Boltz.case_study}
\end{equation}
Figure~\ref{fig:alpha_c.binary} illustrates the regions of alignment in this particular case.

\begin{figure}[!t]
\begin{center}
\includegraphics[width=0.8\textwidth]{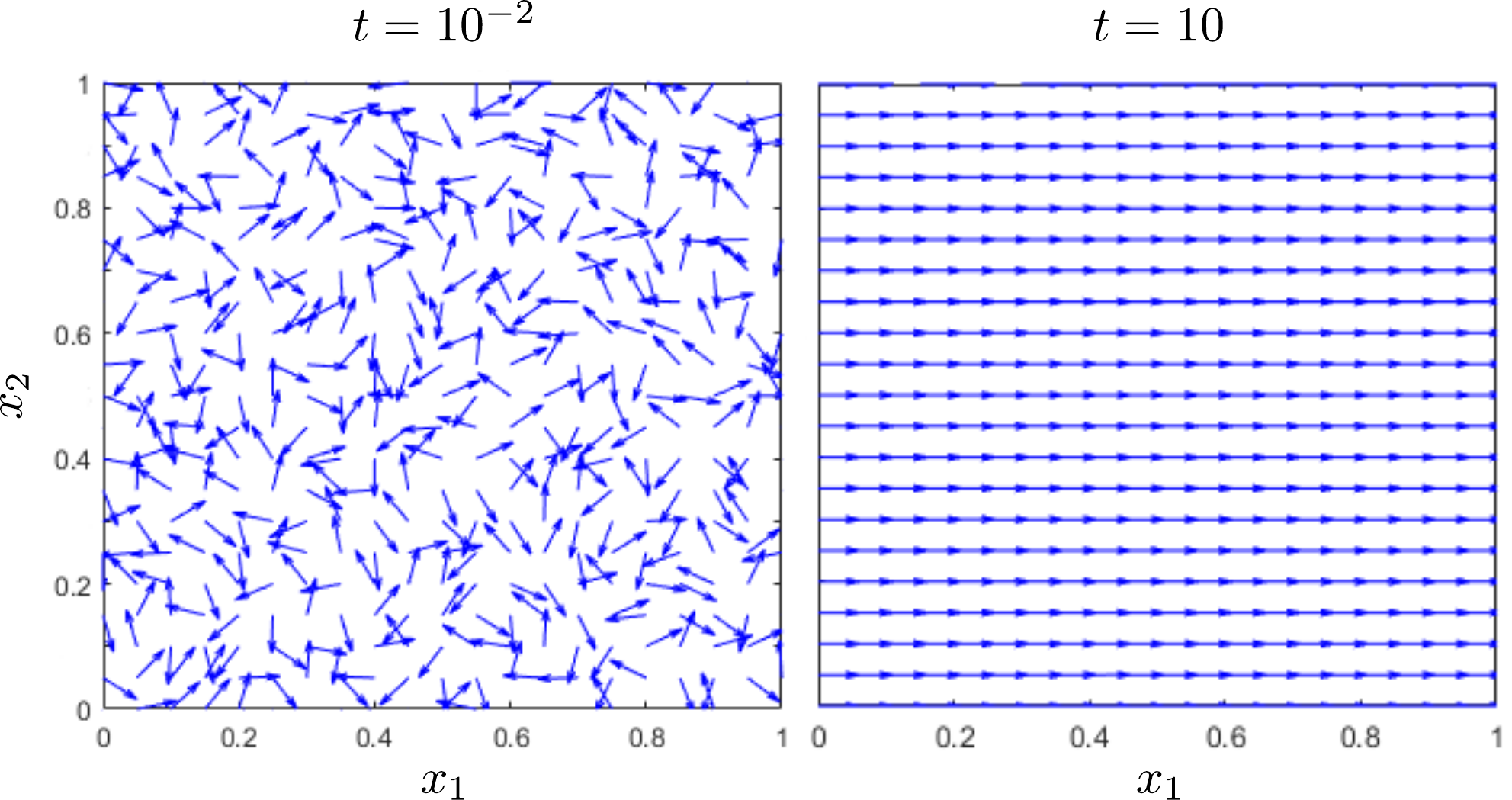}
\caption{Alignment to the desired direction $\alpha_d=0$ for the case study of Section~\ref{sect:spat.homog.Boltz_numsim} with $\rho=\frac{1}{2}$, $\alpha_c=\frac{\pi}{5}$. Left: $t=10^{-2}$, right: $t=10$.}
\label{fig:align.spat_homog.Boltz}
\end{center}
\end{figure}

In Figure~\ref{fig:align.spat_homog.Boltz}, we fix $\rho=\frac{1}{2}$ and $\alpha_c=\frac{\pi}{5}$, which comply with~\eqref{eq:cond.align.Boltz.case_study}, and we represent the mean velocity of the pedestrians at two successive times, namely $t=10^{-2}$ after one iteration of the algorithm and $t=10$ after $10^3$ iterations. Note that Figure \ref{fig:align.spat_homog.Boltz} shows only the subdomain $[0,\,1]\times [0,\,1]$, to illustrate the alignment at $\alpha_d = 0$ better.

\begin{figure}[!t]
\begin{center}
\includegraphics[width=0.7\textwidth]{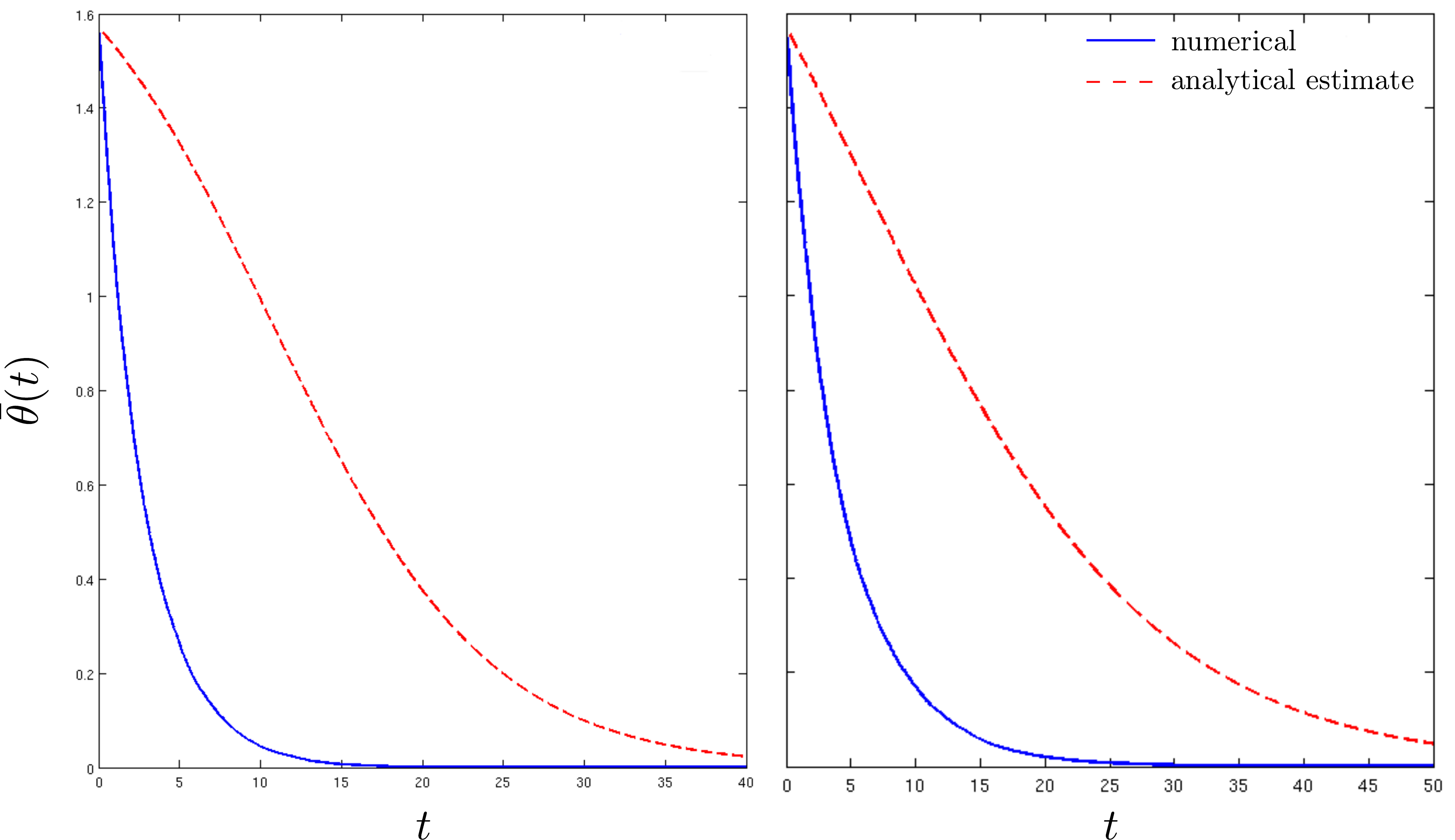}
\caption{The mapping $t\mapsto\bar{\theta}(t)$ computed numerically (blue solid curve) and estimated theoretically by~\eqref{eq:estimate.bartheta} (red dashed curve). Left: $(\rho,\,\alpha_c)=\left(\frac{1}{2},\,\frac{\pi}{5}\right)$; right: $(\rho,\,\alpha_c)=\left(\frac{1}{3},\,\frac{3}{5}\pi\right)$.}
\label{fig:thetabar}
\end{center}
\end{figure}

Next we analyse the rate of convergence in time to the desired direction by plotting an MC approximation, say $\thetaMC$, of $\bar{\theta}$ cf.~\eqref{eq:bartheta}. Following~\cite[Chapter 3]{pareschi2013BOOK} we take
$$ \thetaMC(t):=\frac{1}{N}\sum^N_{i=1}\abs{\Theta_i(t)-\alpha_d}, $$
where $N$ is the number of particles used in the MC simulations and the $\Theta_i$'s are $N$ values of the angle $\theta$ sampled from the (MC-approximated) probability distribution $\frac{1}{\rho}f(t,\,\theta)$. In Figure~\ref{fig:thetabar} we compare the mapping $t\mapsto\thetaMC(t)$ with the theoretical estimate~\eqref{eq:estimate.bartheta} for the choices $(\rho,\,\alpha_c)=\left(\frac{1}{2},\,\frac{\pi}{5}\right)$ on the left and $(\rho,\,\alpha_c)=\left(\frac{1}{3},\,\frac{3}{5}\pi\right)$ on the right (notice that also the latter pair satisfies~\eqref{eq:cond.align.Boltz.case_study}). In both examples the simulations converge faster than the theoretical rate.
\begin{figure}[!t]
\begin{center}
\includegraphics[width=\textwidth]{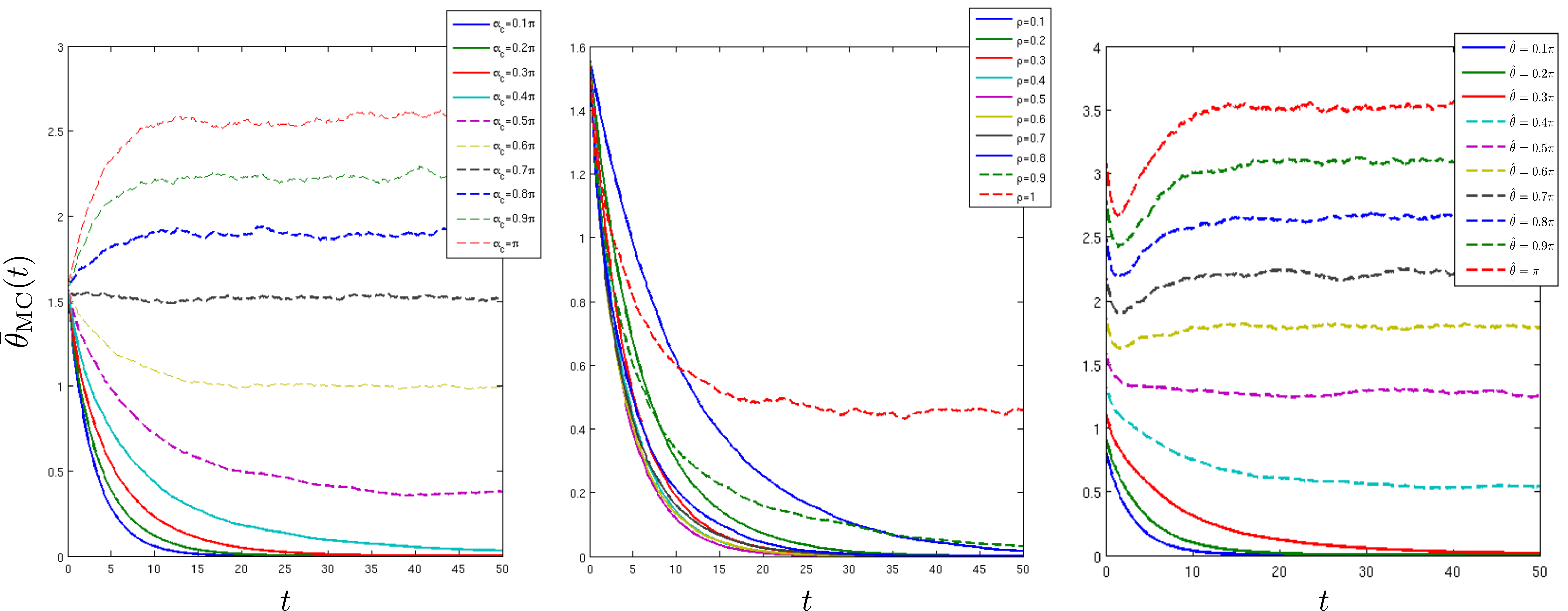}
\caption{The mapping $t\mapsto\thetaMC(t)$ for: (\textbf{left}) $\rho=\frac{1}{2}$ and different choices of $\alpha_c\in (0,\,\pi]$; (\textbf{centre}) $\alpha_c=\frac{\pi}{5}$ and different choices of $\rho\in (0,\,1]$; (\textbf{right}) $\rho=1$, $\alpha_c=\frac{2}{5}\pi$ and different values of $\bar{\theta}_0$ obtained by varying the parameter $\hat{\theta}$ in~\eqref{eq:f0.Gaussian}. Dashed curves indicate the cases in which alignment fails.}
\label{fig:thetabarMC.parameters}
\end{center}
\end{figure}

In Figure~\ref{fig:thetabarMC.parameters} we further investigate the dependence of the asymptotic alignment on $\rho$, $\alpha_c$ and $\bar{\theta}_0$ by plotting the evolution of $\thetaMC$ in the case:
\begin{enumerate}[(i)]
\item $\rho=\frac{1}{2}$ and setting $\alpha_c=\frac{k}{10}\pi$ for $k=1,\,\dots,\,10$, cf. Figure~\ref{fig:thetabarMC.parameters} (left);
\item $\alpha_c=\frac{\pi}{5}$ and setting $\rho=\frac{k}{10}$ for $k=1,\,\dots,\,10$, cf. Figure~\ref{fig:thetabarMC.parameters} (centre);
\item $\rho=1$, $\alpha_c=\frac{2}{5}\pi$ and taking as initial condition the distribution
\begin{equation}
	f_0(\theta)=\frac{\rho}{\sqrt{2\pi\sigma^2}}\sum_{n=-\infty}^{+\infty}e^{-\frac{{\left(\theta+2n\pi-\hat{\theta}\right)}^2}{2\sigma^2}},
		\qquad \theta\in I,
	\label{eq:f0.Gaussian}
\end{equation}
i.e. the Gaussian bell with mean $\hat{\theta}\in I$ and variance $\sigma^2>0$ ``folded'' by $2\pi$-periodicity of $\theta$ into the interval $I$. It can be checked that such an $f_0$ has mass $\rho$ for all $\hat{\theta}$, $\sigma^2$, while the corresponding value of $\bar{\theta}_0$ varies with $\hat{\theta}$, $\sigma^2$. In particular, we let $\sigma^2=1$ and $\hat{\theta}=\frac{k}{10}\pi$ for $k=1,\,\dots,\,10$, cf. Figure~\ref{fig:thetabarMC.parameters} (right).
\end{enumerate}

In case (i),~\eqref{eq:cond.align.Boltz.case_study} predicts alignment for $\abs{\alpha_c}\leq\frac{\pi}{4}$ but the numerical simulations show alignment also for $\alpha_c=\frac{3}{10}\pi,\,\frac{2}{5}\pi$. Note that Theorem~\ref{theo:align.Boltz} provides only \emph{sufficient} conditions to characterise the asymptotic trend of the system. On the other hand, the numerical simulations confirm that alignment is not possible for every value of $\alpha_c$. Specifically, we observe that $\thetaMC$ always converges to some possibly non-zero asymptotic value, nonetheless, due to Proposition~\ref{prop:only_alphad.Boltz}, this does not indicate a possible alignment in a direction different from $\alpha_d$. It means instead that interactions may, in some cases, preserve or even increase the initial ``disorder'' of the pedestrian orientations. In fact we see that for $\alpha_c\geq\frac{7}{10}\pi$ the value of $\thetaMC$ either remains almost constant or increases in time.

In case (ii),~\eqref{eq:cond.align.Boltz.case_study} predicts alignment for $\rho\leq\frac{10}{19}\approx 0.52$ but the numerical simulations show alignment also for $\rho < \frac{9}{10}$.

In case (iii) the conditions~\eqref{eq:cond.align.Boltz} of Theorem~\ref{theo:align.Boltz} are always violated, because for $\rho=1$ it results $a(\rho)=1$ and subsequently $\bar{\theta}_0<0$. In spite of this, the numerical simulations show that in some cases the alignment is possible, particularly when $\hat{\theta}$ is sufficiently close to $\alpha_d=0$. This can be understood by noticing that for $\hat{\theta}=\alpha_d=0$ the mean of~\eqref{eq:f0.Gaussian} is precisely $\hat{\theta}$, for $f_0$ is symmetric in $I$. Therefore on average the pedestrians are initially not too far from their desired direction. However, such an asymptotic alignment is soon lost by varying $\hat{\theta}$, hence $\bar{\theta}_0$.

The tests (i)--(iii) confirm that, although the hypotheses of Theorem~\ref{theo:align.Boltz} are possibly overly-restrictive, the theoretical predictions capture qualitatively the role played by the parameters of the model in the alignment process.

\subsection{The quasi-invariant direction limit}
In this section we study the \emph{quasi-invariant direction limit}, namely a form of grazing collision limit~\cite{villani2002BOOKCH}, see also~\cite{pareschi2013BOOK,toscani2006CMS}, in order to extract the principal part of the interaction rules encoded in the Boltzmann collisional operator. This procedure allows us to derive a hydrodynamic-type model closer to the macroscopic scale, in which binary interactions are replaced by mean-field interactions defining a transport velocity in the space of the microscopic states.

The basic idea is to assume that pedestrian interactions get simultaneously weaker and more frequent. To this purpose we begin by rewriting~\eqref{eq:int.rules.spathomog} as
\begin{equation}
	\theta=\theta_\ast+\epsilon[(1-P(\theta_\ast,\,\varphi_\ast))(\alpha_d-\theta_\ast)
		+P(\theta_\ast,\,\varphi_\ast)\alpha_c]+2h\pi,
	\label{eq:int.rules.spathomog.eps}
\end{equation}
where $\epsilon>0$ is a dimensionless parameter measuring the strength of the interaction. For $\epsilon=1$ we recover precisely~\eqref{eq:int.rules.spathomog}. At the same time, we scale the time in the Boltzmann equation~\eqref{eq:Boltz.weak.spat_homog} as $t/\epsilon$, meaning that the interaction rate becomes $O(1/\epsilon)$, so that we finally have
\begin{equation}
	\frac{d}{dt}\int_I\psi(\theta)f(t,\,\theta)\,d\theta=
		\frac{1}{\epsilon}\int_I\int_I\left(\psi(\theta)-\psi(\theta_\ast)\right)f(t,\,\theta_\ast)
			f(t,\,\varphi_\ast)\,d\theta_\ast d\varphi_\ast,
	\label{eq:Boltz.weak.spat_homog.eps}
\end{equation}
where now we take $\psi\in C^2(I)$ with $\psi(\pm\pi+\alpha_d)=0$, still $2\pi$-periodic on the whole real line.

As we are interested in small values of $\epsilon$, in view of~\eqref{eq:int.rules.spathomog.eps} and of the said periodicity of $\psi$ we can expand
\begin{align*}
	\psi(\theta)-\psi(\theta_\ast) &= \epsilon\psi'(\theta_\ast)[\alpha_d-\theta_\ast+P(\theta_\ast,\,\varphi_\ast)(\alpha_c-\alpha_d+\theta_\ast)] \\
	&\phantom{=} +\frac{\epsilon^2}{2}\psi''(\hat{\theta})[\alpha_d-\theta_\ast+P(\theta_\ast,\,\varphi_\ast)(\alpha_c-\alpha_d+\theta_\ast)]^2
\end{align*}
for some $\hat{\theta}\in I$ depending on $\theta_\ast$ and $\theta$. Let us define
$$ \cP(\theta_\ast,\,\varphi_\ast):=\alpha_d-\theta_\ast+P(\theta_\ast,\,\varphi_\ast)(\alpha_c-\alpha_d+\theta_\ast) $$
for brevity; plugging into~\eqref{eq:Boltz.weak.spat_homog.eps} we obtain
$$ \frac{d}{dt}\int_I\psi(\theta)f(t,\,\theta)\,d\theta=\int_I\psi'(\theta_\ast)\left(\int_I\cP(\theta_\ast,\,\varphi_\ast)
	f(t,\,\varphi_\ast)\,d\varphi_\ast\right)f(t,\,\theta_\ast)\,d\theta_\ast+R(\epsilon), $$
where, considering that $\abs{\cP(\theta_\ast,\,\varphi_\ast)}\leq\pi$ because both angles $\alpha_d-\theta_\ast$ and $\alpha_c$ belong to $[-\pi,\,\pi)$ while $0\leq P(\theta_\ast,\,\varphi_\ast)\leq 1$, the remainder $R(\epsilon)$ satisfies
$$ \abs{R(\epsilon)}=\frac{\epsilon}{2}\abs{\int_I\int_I\psi''(\hat{\theta})\cP^2(\theta_\ast,\,\varphi_\ast)
	f(t,\,\theta_\ast)f(t,\,\varphi_\ast)\,d\theta_\ast\,d\varphi_\ast}\leq\frac{\epsilon}{2}\norm{\psi''}{\infty}\pi^2\rho^2, $$
with $\norm{\cdot}{\infty}$ denoting the usual $\infty$-norm. Therefore in the limit $\epsilon\to 0^+$ we find
\begin{equation}
	\frac{d}{dt}\int_I\psi(\theta)f(t,\,\theta)\,d\theta=\int_I\psi'(\theta_\ast)H[f](\theta_\ast)f(t,\,\theta_\ast)\,d\theta_\ast,
	\label{eq:quasinv_limit.weak}
\end{equation}
where, recalling also~\eqref{eq:P.space_homog}, we define
\begin{align}
	\begin{aligned}[t]
		H[f](\theta_\ast) &:= \int_I\cP(\theta_\ast,\,\varphi_\ast)f(t,\,\varphi_\ast)\,d\varphi_\ast \\
		&\phantom{:}= \rho(\alpha_d-\theta_\ast)+a(\rho)(\alpha_c-\alpha_d+\theta_\ast)\int_I\cG(\abs{\theta_\ast-\varphi_\ast})f(t,\,\varphi_\ast)\,d\varphi_\ast.
	\end{aligned}
	\label{eq:H[f]}
\end{align}
Equation~\eqref{eq:quasinv_limit.weak} corresponds to the weak form of the PDE
\begin{equation}
	\partial_tf+\partial_\theta(H[f]f)=0,
	\label{eq:quasinv_limit.strong}
\end{equation}
which is a non-linear conservation law with non-local flux.
	
\subsection{Asymptotic alignment under mean-field interactions}
Like for the Boltzmann-type model, we are interested in the steady state solutions of~\eqref{eq:quasinv_limit.strong} to find out under which conditions this equation predicts the emergence of group-wise alignment. Since~\eqref{eq:quasinv_limit.strong} has been derived as an approximation of~\eqref{eq:Boltz.weak.spat_homog} on a time scale much larger than that of the binary interactions (recall the scaling $t\to t/\epsilon$), it is reasonable to expect that the large-time behaviour of its solutions, including the conditions for alignment, approximates the asymptotic trends of~\eqref{eq:Boltz.weak.spat_homog}.

To begin with, by a straightforward calculation we show, like in Proposition~\ref{prop:only_alphad.Boltz}, that also for~\eqref{eq:quasinv_limit.strong} the only possible alignment corresponds to $\alpha_d$.
\begin{proposition}
A distribution of the form $\rho\delta_\alpha\in\cM^\rho_+(I)$, with $\rho>0$ and $\alpha\in I$, is a weak steady solution to~\eqref{eq:quasinv_limit.strong} with transport speed~\eqref{eq:H[f]} if and only if $\alpha=\alpha_d$.
\end{proposition}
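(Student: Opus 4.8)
The plan is to mirror the argument of Proposition~\ref{prop:only_alphad.Boltz}, now working with the weak formulation~\eqref{eq:quasinv_limit.weak} of the conservation law~\eqref{eq:quasinv_limit.strong}. First I would recall that, after integration by parts, $\rho\delta_\alpha$ is a weak steady solution of~\eqref{eq:quasinv_limit.strong} precisely when the right-hand side of~\eqref{eq:quasinv_limit.weak} vanishes for every admissible test function, i.e. when $\int_I\psi'(\theta_\ast)H[\rho\delta_\alpha](\theta_\ast)\rho\delta_\alpha(\theta_\ast)\,d\theta_\ast=0$ for all $\psi\in C^2(I)$ with $\psi(\pm\pi+\alpha_d)=0$ (the boundary terms produced by integrating $\partial_\theta(H[f]f)$ against $\psi$ disappear thanks to this condition on $\psi$).

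Second, I would evaluate $H[\rho\delta_\alpha]$ along the support of the measure. Substituting $f=\rho\delta_\alpha$ into~\eqref{eq:H[f]} and using the continuity of $\cG$ yields the continuous function
$$ H[\rho\delta_\alpha](\theta_\ast)=\rho(\alpha_d-\theta_\ast)+a(\rho)\rho(\alpha_c-\alpha_d+\theta_\ast)\cG(\abs{\theta_\ast-\alpha}). $$
At $\theta_\ast=\alpha$ the second summand drops out because $\cG(0)=\frac{1}{\pi}\min\{0,\,2\pi\}=0$, so $H[\rho\delta_\alpha](\alpha)=\rho(\alpha_d-\alpha)$, and the steady-state condition collapses to $\rho^2\psi'(\alpha)(\alpha_d-\alpha)=0$ for all such $\psi$.

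Third, I would conclude in both directions. If $\alpha=\alpha_d$ the last expression vanishes identically, so $\rho\delta_{\alpha_d}$ is indeed a weak steady solution. Conversely, since for any fixed $\alpha\in I$ one can pick an admissible test function (e.g.\ a suitable combination of $\sin(\theta-\alpha_d)$ and $\cos(\theta-\alpha_d)+1$) with $\psi'(\alpha)\neq0$, the condition forces $\alpha_d-\alpha=0$, i.e.\ $\alpha=\alpha_d$. I do not expect any genuine obstacle here: the only point deserving care is that the non-local term in $H$ contributes nothing on the diagonal, which is exactly the vanishing $\cG(0)=0$ — the same mechanism ($P(\alpha,\,\alpha)=0$) that drove Proposition~\ref{prop:only_alphad.Boltz}.
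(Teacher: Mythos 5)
Your argument is correct and follows the paper's own proof essentially verbatim: substitute $\rho\delta_\alpha$ into the weak form~\eqref{eq:quasinv_limit.weak}, use $\cG(0)=0$ to reduce $H[\rho\delta_\alpha](\alpha)$ to $\rho(\alpha_d-\alpha)$, and conclude from the arbitrariness of $\psi$. The only addition is your explicit exhibition of admissible test functions with $\psi'(\alpha)\neq 0$, a detail the paper leaves implicit.
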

\begin{proof}
Substituting $\rho\delta_\alpha$ in~\eqref{eq:quasinv_limit.weak} gives
$$ \rho\psi'(\alpha)H[\rho\delta_\alpha](\alpha)=0. $$
Since $\cG(0)=0$ we further have $H[\rho\delta_\alpha](\alpha)=\rho(\alpha_d-\alpha)$, cf.~\eqref{eq:H[f]}. This equation has to hold for every $\psi$, therefore $\alpha = \alpha_d$.
\end{proof}

Next we claim that, at least in a suitable density-dependent range of values of the deviation angle $\alpha_c$, the group-wise alignment at $\theta=\alpha_d$ is the configuration reached asymptotically in time by the mean-field interaction model \emph{regardless of its initial configuration}. In other words, we can give (sufficient) conditions on $\rho$, $\alpha_c$ ensuring that $\rho\delta_{\alpha_d}$ is a \emph{globally attractive equilibrium} to~\eqref{eq:quasinv_limit.strong}.

To tackle this we introduce the \emph{characteristic line} of~\eqref{eq:quasinv_limit.strong} issuing from $\theta\in I$ as the mapping $t\mapsto\gamma_t(\theta)$ satisfying
\begin{equation}
	\begin{cases}
		\dot{\gamma}_t(\theta)=H[f](\gamma_t(\theta)), & t>0 \\[2mm]
		\gamma_0(\theta)=\theta,
	\end{cases}
	\label{eq:characteristics}
\end{equation}
where the dot over a variable means henceforth time derivative. In practice, $\gamma_t(\theta)$ is the direction at time $t>0$ of a pedestrian who initially ($t=0$) moved in the direction $\theta$. We point out that $\gamma_t(\theta)$ is understood ``modulus $2\pi$'', meaning that, as a function of $\theta$, it maps $I$ into itself at every time. Since~\eqref{eq:quasinv_limit.strong} describes a transport at speed $H[f]$ of the initial kinetic distribution $f_0\in\cM^\rho_+(I)$ in the state space $I$, the (weak) solution $f$ can be written formally by pushing $f_0$ forwards in time along the characteristics:
\begin{equation}
	f(t)=\gamma_t\#f_0\in\cM^\rho_+(I), \quad t>0.
	\label{eq:pushfwd}
\end{equation}

We study the characteristics of~\eqref{eq:quasinv_limit.strong} by the following lemma, which establishes the continuous dependence of the mapping $t\mapsto\gamma_t(\theta)$ on the angle $\theta\in I$.

\begin{lemma} \label{lemma:char.cont_dep}
Set $C:=1+\frac{\abs{\alpha_c}}{\pi}$ with $\alpha_c\in [-\pi,\,\pi)$. For all $\theta_1,\,\theta_2\in I$ it holds:
$$ \abs{\gamma_t(\theta_2)-\gamma_t(\theta_1)}\leq\abs{\theta_2-\theta_1}e^{-\rho[1-a(\rho)(1+C)]t}, \quad t\geq 0. $$
\end{lemma}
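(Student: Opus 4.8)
The plan is to convert \eqref{eq:characteristics} into a differential inequality for the difference of the two characteristics and to close it by Grönwall's lemma; the core is a one-sided Lipschitz (dissipativity) estimate for the transport speed $H[f]$ on $I$. Fix $t>0$ and set $\delta(t):=\gamma_t(\theta_2)-\gamma_t(\theta_1)$. As long as the characteristics stay away from the endpoints of $I$, the map $t\mapsto\delta(t)$ is absolutely continuous with $\dot\delta(t)=H[f](\gamma_t(\theta_2))-H[f](\gamma_t(\theta_1))$, so $\frac{d}{dt}\abs{\delta(t)}=\operatorname{sgn}\delta(t)\,\dot\delta(t)$ wherever $\delta(t)\neq0$ (and there is nothing to prove when $\delta\equiv0$).

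For the Lipschitz estimate I would not bound $H[f]$ crudely but exploit its structure. With $K_f(\theta_*):=\int_I\cG(\abs{\theta_*-\varphi_*})f(t,\varphi_*)\,d\varphi_*$, formulas \eqref{eq:H[f]} and \eqref{eq:P.space_homog} give the convex-combination form $H[f](\theta_*)=(\alpha_d-\theta_*)(\rho-a(\rho)K_f(\theta_*))+\alpha_c\,a(\rho)K_f(\theta_*)$, and a short telescoping identity then yields, for $\eta_1,\eta_2\in I$,
$$ H[f](\eta_2)-H[f](\eta_1)=-(\eta_2-\eta_1)\bigl(\rho-a(\rho)K_f(\eta_2)\bigr)-a(\rho)\bigl(K_f(\eta_2)-K_f(\eta_1)\bigr)\bigl[(\alpha_d-\eta_1)-\alpha_c\bigr]. $$
Three elementary facts finish this step: since $0\le\cG\le1$ one has $0\le K_f\le\rho$, so $\rho-a(\rho)K_f\ge\rho(1-a(\rho))\ge0$ (using $a(\rho)\le1$); the map $\theta_*\mapsto\cG(\abs{\theta_*-\varphi_*})$ is $\tfrac1\pi$-Lipschitz on $I$ (the $\tfrac1\pi$-Lipschitz function $\cG$ composed with the $1$-Lipschitz map $\theta_*\mapsto\abs{\theta_*-\varphi_*}$), so $\abs{K_f(\eta_2)-K_f(\eta_1)}\le\tfrac{\rho}{\pi}\abs{\eta_2-\eta_1}$; and $\abs{(\alpha_d-\eta_1)-\alpha_c}\le\abs{\alpha_d-\eta_1}+\abs{\alpha_c}\le\pi+\abs{\alpha_c}=\pi C$.

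Multiplying the displayed identity by $\operatorname{sgn}\delta(t)=\operatorname{sgn}(\eta_2-\eta_1)$ with $\eta_i=\gamma_t(\theta_i)$, the first term contributes at most $-\rho(1-a(\rho))\abs{\delta(t)}$ --- the genuine dissipation stemming from the linear relaxation towards $\alpha_d$ --- while the nonlocal term is bounded by $a(\rho)\cdot\tfrac{\rho}{\pi}\abs{\delta(t)}\cdot\pi C=\rho\,a(\rho)\,C\,\abs{\delta(t)}$. Hence $\frac{d}{dt}\abs{\delta(t)}\le-\rho\bigl[1-a(\rho)(1+C)\bigr]\abs{\delta(t)}$, and Grönwall's lemma gives $\abs{\gamma_t(\theta_2)-\gamma_t(\theta_1)}\le\abs{\theta_2-\theta_1}\,e^{-\rho[1-a(\rho)(1+C)]t}$; no sign condition on $1-a(\rho)(1+C)$ is needed, the bound simply degrading into growth when that quantity is negative.

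The only genuine obstacle is the ``modulus $2\pi$'' convention for $\gamma_t$: a characteristic reaching the endpoint $\pi+\alpha_d$ is re-injected at $-\pi+\alpha_d$, so $\delta(t)$ could jump by $\pm2\pi$ and break the inequality. I would remove this by lifting the flow to $\R$: since $\gamma_t$ is an orientation-preserving homeomorphism of $I$ with endpoints identified, its lift $\tilde\gamma_t$ is monotone and satisfies $\tilde\gamma_t(\theta+2\pi)=\tilde\gamma_t(\theta)+2\pi$; consequently, for $\theta_1,\theta_2$ chosen in one period, $\tilde\gamma_t(\theta_2)-\tilde\gamma_t(\theta_1)$ keeps the sign of $\theta_2-\theta_1$ and stays bounded, the computation above applies verbatim to $\tilde\gamma_t$, and the resulting estimate descends to $\gamma_t$. (Equivalently, one may prove the bound first for $\theta_1,\theta_2$ so close that no wrap-around occurs on a given time interval, then propagate it by the flow/semigroup property.)
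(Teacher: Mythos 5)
Your proof is correct and follows essentially the same route as the paper: the telescoping identity for $H[f](\eta_2)-H[f](\eta_1)$ is exactly the paper's splitting into a dissipative term with coefficient $\rho-a(\rho)\int_I\cG(\cdot)f\,d\varphi\geq\rho(1-a(\rho))$ plus a nonlocal perturbation bounded via the $\tfrac1\pi$-Lipschitz continuity of $\cG$ and the bound $\pi C$ on $\abs{\gamma_t(\theta_1)-\alpha_d+\alpha_c}$, followed by Gr\"onwall. The only differences are cosmetic --- you apply the differential form of Gr\"onwall to $\abs{\delta(t)}$ directly, whereas the paper keeps the signed difference, derives two-sided bounds and uses an integrating factor with the integral form --- and you additionally flag and resolve the wrap-around issue for the $2\pi$-periodic characteristics, a point the paper passes over in silence.
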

\begin{proof}
First we notice that, using~\eqref{eq:pushfwd}, we can rewrite the transport speed $H[f]$ as
\begin{align*}
	H[f](\theta) &= H[\gamma_t\#f_0](\theta) \\
	&= \rho(\alpha_d-\theta)+a(\rho)(\alpha_c-\alpha_d+\theta)\int_I\cG(\abs{\theta-\gamma_t(\varphi)})f_0(\varphi)\,d\varphi.
\end{align*}

We fix now $\theta_1,\,\theta_2\in I$. Writing~\eqref{eq:characteristics} first for $\theta=\theta_1$ then for $\theta=\theta_2$ and taking the difference of the two equations we get
\begin{align*}
	\frac{d}{dt}(\gamma_t(\theta_2) &- \gamma_t(\theta_1)) \\
	&= -\left\{\rho-a(\rho)\int_I\cG(\abs{\gamma_t(\theta_2)-\gamma_t(\varphi)})f_0(\varphi)\,d\varphi\right\}
		\left(\gamma_t(\theta_2)-\gamma_t(\theta_1)\right) \\
	&\phantom{=} +a(\rho)\int_I\Bigl[\cG(\abs{\gamma_t(\theta_2)-\gamma_t(\varphi)})
		-\cG(\abs{\gamma_t(\theta_1)-\gamma_t(\varphi)})\Bigr]f_0(\varphi)\,d\varphi \\
	&\phantom{=} \quad\times(\gamma_t(\theta_1)-\alpha_d+\alpha_c).
\end{align*}

Let us temporarily call $\mathcal{I}$ the second term at the right-hand side. Since $\gamma_t(\theta_1)\in I$ for all $t>0$ by $2\pi$-periodicity, it results $\gamma_t(\theta_1)-\alpha_d+\alpha_c\in [-\pi+\alpha_c,\,\pi+\alpha_c)$, hence
$$ \abs{\gamma_t(\theta_1)-\alpha_d+\alpha_c}\leq\max\{\abs{-\pi+\alpha_c},\,\abs{\pi+\alpha_c}\}=\pi+\abs{\alpha_c}=\pi C, $$
where $C$ is the constant defined in the statement of the lemma. Moreover, from~\eqref{eq:P.space_homog} we see that the function $\cG$ is Lipschitz continuous with Lipschitz constant equal to $\frac{1}{\pi}$, therefore on the whole we have
\begin{align*}
	\abs{\mathcal{I}} &\leq Ca(\rho)\int_I\Bigl\vert\abs{\gamma_t(\theta_2)-\gamma_t(\varphi)}-\abs{\gamma_t(\theta_1)-\gamma_t(\varphi)}\Bigr\vert
		f_0(\varphi)\,d\varphi \\
	&\leq Ca(\rho)\int_I\abs{\gamma_t(\theta_2)-\gamma_t(\theta_1)}f_0(\varphi)\,d\varphi \\
	&=Ca(\rho)\rho\abs{\gamma_t(\theta_2)-\gamma_t(\theta_1)}.
\end{align*}
This implies on one hand
\begin{align*}
	\dfrac{d}{dt}(\gamma_t(\theta_2) &- \gamma_t(\theta_1)) \\
	&\leq -\left\{\rho-a(\rho)\int_I\cG(\abs{\gamma_t(\theta_2)-\gamma_t(\varphi)})f_0(\varphi)\,d\varphi\right\}\left(\gamma_t(\theta_2)-\gamma_t(\theta_1)\right) \\
	&\phantom{\leq} +Ca(\rho)\rho\abs{\gamma_t(\theta_2)-\gamma_t(\theta_1)}
\end{align*}
and on the other hand
\begin{align*}
	\dfrac{d}{dt}(\gamma_t(\theta_2) &- \gamma_t(\theta_1)) \\
	&\geq -\left\{\rho-a(\rho)\int_I\cG(\abs{\gamma_t(\theta_2)-\gamma_t(\varphi)})f_0(\varphi)\,d\varphi\right\}\left(\gamma_t(\theta_2)-\gamma_t(\theta_1)\right) \\
	&\phantom{\leq} -Ca(\rho)\rho\abs{\gamma_t(\theta_2)-\gamma_t(\theta_1)},
\end{align*}
thus, setting
\begin{align*}
	\lambda(t) &:= -\left\{\rho-a(\rho)\int_I\cG(\abs{\gamma_t(\theta_2)-\gamma_t(\varphi)})f_0(\varphi)\,d\varphi\right\} \\
	u(t) &:= \gamma_t(\theta_2)-\gamma_t(\theta_1)
\end{align*}
for ease of notation,
$$ -Ca(\rho)\rho\abs{u(t)}\leq\dot{u}(t)-\lambda(t)u(t)\leq Ca(\rho)\rho\abs{u(t)}. $$

Let $\Lambda(t):=\int_0^t\lambda(s)\,ds$. Multiplying the chain of inequalities above by $e^{-\Lambda(t)}$ and integrating in time gives
$$ -Ca(\rho)\rho\int_0^te^{\Lambda(t)-\Lambda(s)}\abs{u(s)}\,ds\leq u(t)-e^{\Lambda(t)}u_0\leq Ca(\rho)\rho\int_0^te^{\Lambda(t)-\Lambda(s)}\abs{u(s)}\,ds, $$
where $u_0:=u(0)=\gamma_0(\theta_2)-\gamma_0(\theta_1)=\theta_2-\theta_1$. Therefore
$$ \abs{u(t)-e^{\Lambda(t)}u_0}\leq Ca(\rho)\rho\int_0^te^{\Lambda(t)-\Lambda(s)}\abs{u(s)}\,ds $$
and subsequently
\begin{equation}
	\abs{u(t)}\leq e^{\Lambda(t)}\abs{u_0}+Ca(\rho)\rho\int_0^te^{\Lambda(t)-\Lambda(s)}\abs{u(s)}\,ds.
	\label{eq:u}
\end{equation}

Since $0\leq\cG\leq 1$ we observe that $\lambda(t)\leq -\rho(1-a(\rho))$, thus $\Lambda(t)-\Lambda(s)\leq -\rho(1-a(\rho))(t-s)$ for all $0\leq s\leq t$. Therefore from~\eqref{eq:u} we deduce
$$ \abs{u(t)}\leq e^{-\rho(1-a(\rho))t}\abs{u_0}+Ca(\rho)\rho\int_0^te^{-\rho(1-a(\rho))(t-s)}\abs{u(s)}\,ds. $$
Invoking now Gronwall's inequality we deduce that
$$ \abs{u(t)}\leq\abs{u_0}e^{-\rho[1-a(\rho)(1+C)]t}, $$
which, after substituting the definitions of $u(t),\,u_0$ in terms of $\gamma_t,\,\gamma_0$, concludes the proof.
\end{proof}

Next we establish a sufficient condition on the deviation angle $\alpha_c$ such that all the characteristics of~\eqref{eq:quasinv_limit.strong}  converge in time to the desired angle $\alpha_d$.

\begin{proposition} \label{prop:char.conv_alphad}
Let $\alpha_c\in [-\pi,\,\pi)$ be such that
$$ \abs{\alpha_c}<\pi\left(\frac{1}{a(\rho)}-2\right). $$
Then
$$ \lim_{t\to+\infty}\abs{\gamma_t(\theta)-\alpha_d}=0, \quad \forall\,\theta\in I. $$
\end{proposition}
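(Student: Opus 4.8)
The plan is to track the evolution of the single quantity $\abs{\gamma_t(\theta)-\alpha_d}$ along a characteristic and show it satisfies a differential inequality forcing decay to $0$. First I would write the ODE for $\gamma_t(\theta)$ using~\eqref{eq:characteristics} and the expression~\eqref{eq:H[f]} for $H[f]$, rewritten via~\eqref{eq:pushfwd} as in the proof of Lemma~\ref{lemma:char.cont_dep}, namely
$$ \dot{\gamma}_t(\theta)=\rho(\alpha_d-\gamma_t(\theta))+a(\rho)(\alpha_c-\alpha_d+\gamma_t(\theta))\int_I\cG(\abs{\gamma_t(\theta)-\gamma_t(\varphi)})f_0(\varphi)\,d\varphi. $$
Subtracting $\alpha_d$ and setting $g(t):=\gamma_t(\theta)-\alpha_d$, the first term contributes $-\rho g(t)$, which is the ``good'' restoring term, while the second term is a perturbation whose size I would bound using $0\le\cG\le 1$ together with the bound $\abs{\alpha_c-\alpha_d+\gamma_t(\theta)}=\abs{\alpha_c+g(t)}\le\abs{\alpha_c}+\abs{g(t)}\le\abs{\alpha_c}+\pi$, exactly the $\pi C$ estimate already used in Lemma~\ref{lemma:char.cont_dep}. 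Actually a slightly sharper route is to keep $\cG(\abs{\gamma_t(\theta)-\gamma_t(\varphi)})\le\frac1\pi\abs{\gamma_t(\theta)-\gamma_t(\varphi)}\le\frac1\pi(\abs{g(t)}+\abs{\gamma_t(\varphi)-\alpha_d})$ and integrate against $f_0$; but since the statement only involves $\alpha_c$ and not $\bar\theta_0$, the crude bound $\cG\le 1$ is what yields the clean hypothesis $\abs{\alpha_c}<\pi(\frac{1}{a(\rho)}-2)$.

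The key computation is then: $\frac{d}{dt}\abs{g(t)}\le -\rho\abs{g(t)}+a(\rho)\rho(\abs{\alpha_c}+\abs{g(t)})$ wherever $g$ is differentiable (using $\frac{d}{dt}\abs{g}\le\operatorname{sgn}(g)\dot g$ and $\int_I f_0\,d\varphi=\rho$), i.e.
$$ \frac{d}{dt}\abs{g(t)}\le -\rho\bigl[1-a(\rho)\bigr]\abs{g(t)}+a(\rho)\rho\abs{\alpha_c}. $$
Hmm — note this only gives a bounded limit, not decay to $0$, unless one is more careful. To get convergence to $\alpha_d$ I would instead feed in the extra structure: once all characteristics are close together (which Lemma~\ref{lemma:char.cont_dep} gives when $a(\rho)(1+C)<1$, a consequence of the hypothesis since $\abs{\alpha_c}<\pi(\frac1{a(\rho)}-2)$ is equivalent to $a(\rho)(2+\frac{\abs{\alpha_c}}{\pi})<1$, i.e. $a(\rho)(1+C)<1$), the factor $\int_I\cG(\abs{\gamma_t(\theta)-\gamma_t(\varphi)})f_0\,d\varphi$ is itself small and tends to $0$. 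So the better argument is: bound that integral by $\frac{\rho}{\pi}\sup_{\varphi}\abs{\gamma_t(\theta)-\gamma_t(\varphi)}\le\frac{\rho}{\pi}\cdot(\operatorname{diam} I)\,e^{-\rho[1-a(\rho)(1+C)]t}$ from Lemma~\ref{lemma:char.cont_dep}, call it $\delta(t)\to 0$, giving $\frac{d}{dt}\abs{g(t)}\le -\rho\abs{g(t)}+a(\rho)\delta(t)(\abs{\alpha_c}+\pi)$, and then conclude by Gronwall/variation of constants that $\abs{g(t)}\le e^{-\rho t}\abs{g(0)}+a(\rho)(\abs{\alpha_c}+\pi)\int_0^t e^{-\rho(t-s)}\delta(s)\,ds\to 0$ since $\delta\in L^1$ and the kernel is an approximate identity-type convolution with decaying mass.

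The main obstacle is exactly this gap between ``bounded orbit'' and ``convergence to $\alpha_d$'': the naive pointwise estimate only controls $\limsup\abs{g(t)}$ by $\frac{a(\rho)\abs{\alpha_c}}{1-a(\rho)}$, which is not zero, so one genuinely needs the clustering information from Lemma~\ref{lemma:char.cont_dep} to make the perturbation term time-integrable. A secondary technical point is justifying $\frac{d}{dt}\abs{g}\le\operatorname{sgn}(g)\dot g$ at times where $g(t)=0$ (handle by the standard one-sided/Dini derivative argument, or by noting $g$ can only cross $0$ in a way compatible with the inequality), and checking that the hypothesis $\abs{\alpha_c}<\pi(\frac1{a(\rho)}-2)$ is precisely what makes the Lemma~\ref{lemma:char.cont_dep} exponent $-\rho[1-a(\rho)(1+C)]$ strictly negative; the case $a(\rho)=0$ is trivial since then $H[f](\theta)=\rho(\alpha_d-\theta)$ and $\abs{g(t)}=\abs{g(0)}e^{-\rho t}$.
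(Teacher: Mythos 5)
Your proposal is correct and follows essentially the same route as the paper: both rewrite the characteristic ODE as the linear restoring term $-\rho(\gamma_t(\theta)-\alpha_d)$ plus a perturbation, and both invoke Lemma~\ref{lemma:char.cont_dep} (whose exponent is negative exactly because the hypothesis is equivalent to $a(\rho)(1+C)<1$) to show that the characteristics cluster, so the $\cG$-integral and hence the perturbation vanish asymptotically. The only difference is bookkeeping: the paper fixes $\epsilon$, bounds the perturbation by a constant $Ca(\rho)\rho\epsilon$ after a time $\bar{t}$ and lets $\epsilon\to 0$, whereas you keep the explicit exponential bound $\delta(t)$ and conclude via Duhamel, which in addition gives a convergence rate.
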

\begin{proof}
We preliminarily observe that, under the stated constraint on $\alpha_c$, the constant $C$ introduced in Lemma~\ref{lemma:char.cont_dep} is such that $a(\rho)(1+C)<1$. Moreover, for $\theta,\,\varphi\in I$ it results $\abs{\theta-\varphi}<2\pi$, thus from Lemma~\ref{lemma:char.cont_dep} we get on the whole
$$ \abs{\gamma_t(\theta)-\gamma_t(\varphi)}<2\pi e^{-\rho[1-a(\rho)(1+C)]t}\xrightarrow{t\to+\infty} 0 $$
uniformly with respect to $\theta,\,\varphi$. We can rephrase this by saying that for all $\epsilon>0$ there exists a time $t_0>0$, independent of $\theta,\,\varphi$, such that $\abs{\gamma_t(\theta)-\gamma_t(\varphi)}<\epsilon$ for all $t>t_0$. Since $\cG$ is continuous with $\cG(0)=0$, we conclude that there also exists $t_0'>0$ such that $\cG(\abs{\gamma_t(\theta)-\gamma_t(\varphi)})<\epsilon$ for all $t>t_0'$ and all $\theta,\,\varphi\in I$.

Let us now consider the equation of the characteristics:
$$ \dot{\gamma}_t(\theta)=-\rho(\gamma_t(\theta)-\alpha_d)+a(\rho)\left(\int_I\cG(\abs{\gamma_t(\theta)-\gamma_t(\varphi)})f_0(\varphi)\,d\varphi\right)
	(\gamma_t(\theta)-\alpha_d+\alpha_c). $$
For $t>\bar{t}:=\max\{t_0,\,t_0'\}$ we obtain the following estimate:
$$ \abs{a(\rho)\left(\int_I\cG(\abs{\gamma_t(\theta)-\gamma_t(\varphi)})f_0(\varphi)\,d\varphi\right)(\gamma_t(\theta)-\alpha_d+\alpha_c)}
	<Ca(\rho)\rho\epsilon, $$
which implies
$$ \dot{\gamma}_t(\theta)<-\rho(\gamma_t(\theta)-\alpha_d)+Ca(\rho)\rho\epsilon, \qquad
	\dot{\gamma}_t(\theta)>-\rho(\gamma_t(\theta)-\alpha_d)-Ca(\rho)\rho\epsilon $$
and subsequently
$$ -Ca(\rho)\rho\epsilon<\dot{\gamma}_t(\theta)+\rho(\gamma_t-\alpha_d)<Ca(\rho)\rho\epsilon. $$

We set $u(t):=\gamma_t(\theta)-\alpha_d$ and observe that $\dot{u}(t)=\dot{\gamma}_t(\theta)$, multiply the chain of inequalities above by $e^{\rho t}$ and integrate over the time interval $[\bar{t},\,t]$ to obtain
$$ -Ca(\rho)\rho\epsilon\int_{\bar{t}}^te^{-\rho(t-s)}\,ds<u(t)-e^{-\rho(t-\bar{t})}u(\bar{t})<Ca(\rho)\rho\epsilon\int_{\bar{t}}^te^{-\rho(t-s)}\,ds, $$
that is
\begin{align*}
	\abs{u(t)}-e^{-\rho(t-\bar{t})}\abs{u(\bar{t})} &\leq \abs{u(t)-e^{-\rho(t-\bar{t})}u(\bar{t})}<Ca(\rho)\rho\epsilon\int_{\bar{t}}^te^{-\rho(t-s)}\,ds \\
	&= Ca(\rho)\left(1-e^{-\rho(t-\bar{t})}\right)\epsilon.
\end{align*}
In the limit $t\to+\infty$ we find
$$ \lim_{t\to+\infty}\abs{u(t)}=\lim_{t\to+\infty}\abs{\gamma_t(\theta)-\alpha_d}\leq Ca(\rho)\epsilon, $$
which, since $\epsilon>0$ was arbitrary, implies the thesis.
\end{proof}

Thanks to Proposition~\ref{prop:char.conv_alphad}, the main result of this section is now easy to prove.
\begin{theorem} \label{theo:align.mean_field}
Let $\alpha_c\in [-\pi,\,\pi)$ be such that
\begin{equation}
	\abs{\alpha_c}<\pi\left(\frac{1}{a(\rho)}-2\right).
	\label{eq:cond.align.mean_field}
\end{equation}
Then
$$ \lim_{t\to+\infty}W_1(f(t),\,\rho\delta_{\alpha_d})=0 $$
for all $f_0\in\cM^\rho_+(I)$.
\end{theorem}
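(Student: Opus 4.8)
The plan is to read off the result directly from the push-forward representation~\eqref{eq:pushfwd} of the solution together with Proposition~\ref{prop:char.conv_alphad}, whose hypothesis coincides with~\eqref{eq:cond.align.mean_field}. Thus Proposition~\ref{prop:char.conv_alphad} is available and tells us that every characteristic $t\mapsto\gamma_t(\theta)$ converges to $\alpha_d$; what remains is to transfer this pointwise-in-$\theta$ statement into convergence of the $1$-Wasserstein distance.

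First I would note that, by~\eqref{eq:pushfwd}, $f(t)=\gamma_t\#f_0\in\cM^\rho_+(I)$ has total mass $\rho$, the same as $\rho\delta_{\alpha_d}$, so that $W_1(f(t),\,\rho\delta_{\alpha_d})$ is well defined. Next I would build an explicit transference plan: pushing $f_0$ forward under the map $\theta\mapsto(\gamma_t(\theta),\,\alpha_d)$ yields a measure $\lambda_t$ on $I^2$ whose first marginal is $\gamma_t\#f_0=f(t)$ and whose second marginal is $(\alpha_d)\#f_0=\rho\delta_{\alpha_d}$, hence $\lambda_t\in\Pi(f(t),\,\rho\delta_{\alpha_d})$. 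Evaluating the Wasserstein functional on $\lambda_t$ gives
\[ W_1(f(t),\,\rho\delta_{\alpha_d})\leq\iint_{I^2}\abs{\theta-\varphi}\,d\lambda_t(\theta,\,\varphi)=\int_I\abs{\gamma_t(\theta)-\alpha_d}f_0(\theta)\,d\theta. \]

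It then remains to show the right-hand side vanishes as $t\to+\infty$, and this exchange of limit and integral is the only point requiring any care. The cleanest route is to observe that the convergence furnished by Proposition~\ref{prop:char.conv_alphad} is in fact uniform in $\theta\in I$: inspecting its proof, the threshold time $\bar t$ depends only on the tolerance $\epsilon$ (through $t_0,\,t_0'$) and not on $\theta$, while $\abs{\gamma_{\bar t}(\theta)-\alpha_d}\leq 2\pi$ uniformly because $\gamma_{\bar t}(\theta),\,\alpha_d\in I$ and $I$ has length $2\pi$; hence $\sup_{\theta\in I}\abs{\gamma_t(\theta)-\alpha_d}\to 0$, and therefore $W_1(f(t),\,\rho\delta_{\alpha_d})\leq\rho\sup_{\theta\in I}\abs{\gamma_t(\theta)-\alpha_d}\to 0$. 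Alternatively, keeping only the pointwise convergence, one applies dominated convergence, the integrand being bounded by $\operatorname{diam}I=2\pi$ and $f_0$ being a finite measure. All the structural ingredients — the characteristic flow~\eqref{eq:characteristics}, the representation~\eqref{eq:pushfwd}, and the convergence estimates of Lemma~\ref{lemma:char.cont_dep} and Proposition~\ref{prop:char.conv_alphad} — are already in place, so beyond this routine limiting argument I do not foresee any obstacle.
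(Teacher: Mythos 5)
Your proposal is correct and follows essentially the same route as the paper: bound $W_1(f(t),\,\rho\delta_{\alpha_d})$ by an explicit coupling, rewrite the resulting integral along the characteristics via the push-forward representation~\eqref{eq:pushfwd}, and pass to the limit using Proposition~\ref{prop:char.conv_alphad} together with dominated convergence. The only (immaterial) differences are that the paper couples via the product measure $f(t,\,\theta)\otimes\rho\delta_{\alpha_d}(\varphi)$ rather than your push-forward plan $\theta\mapsto(\gamma_t(\theta),\,\alpha_d)$ --- yours is in fact the cleaner choice, as it has exactly the right marginals and avoids the spurious factor $\rho$ --- and that you additionally observe the convergence of the characteristics is uniform in $\theta$, which the paper does not need.
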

\begin{proof}
Proceeding like in Theorem~\ref{theo:align.Boltz} we have
\begin{align*}
	W_1(f(t),\,\rho\delta_{\alpha_d})\leq\iint_{I^2}\abs{\theta-\varphi}\,d(f(t,\,\theta)\otimes\rho\delta_{\alpha_d}(\varphi))
		&= \rho\int_I\abs{\theta-\alpha_d}f(t,\,\theta)\,d\theta \\
	&= \rho\int_I\abs{\gamma_t(\theta)-\alpha_d}f_0(\theta)\,d\theta,
\end{align*}
hence, by dominated convergence and Proposition~\ref{prop:char.conv_alphad},
$$ \lim_{t\to+\infty}W_1(f(t),\,\rho\delta_{\alpha_d})\leq\rho\int_I\lim_{t\to+\infty}\abs{\gamma_t(\theta)-\alpha_d}f_0(\theta)\,d\theta=0, $$
which concludes the argument.
\end{proof}

\begin{remark} \label{rem:comp_conditions}
Conditions~\eqref{eq:cond.align.Boltz} and~\eqref{eq:cond.align.mean_field} resemble each other, thereby confirming the expectation that the asymptotic behaviour of the mean-field interaction equation should approximate that of the binary interaction model. However, a remarkable difference, which makes it not immediate to compare the two conditions in general, is that the former depends on the initial condition $f_0$ through the quantity $\bar{\theta}_0$ while the latter is independent of $f_0$.

We can get a clue on the relationship between the two conditions by fixing the special case $f_0(\theta)=\frac{\rho}{2\pi}$, $\theta\in I$, which corresponds to the initially most ``disordered'' situation with pedestrians following homogeneously all possible directions. Conditions~\eqref{eq:cond.align.Boltz} specialise then in~\eqref{eq:cond.align.Boltz.case_study} and can be compared with~\eqref{eq:cond.align.mean_field} for different choices of the function $a(\rho)$.

\begin{figure}[!t]
\begin{center}
\includegraphics[width=0.8\textwidth]{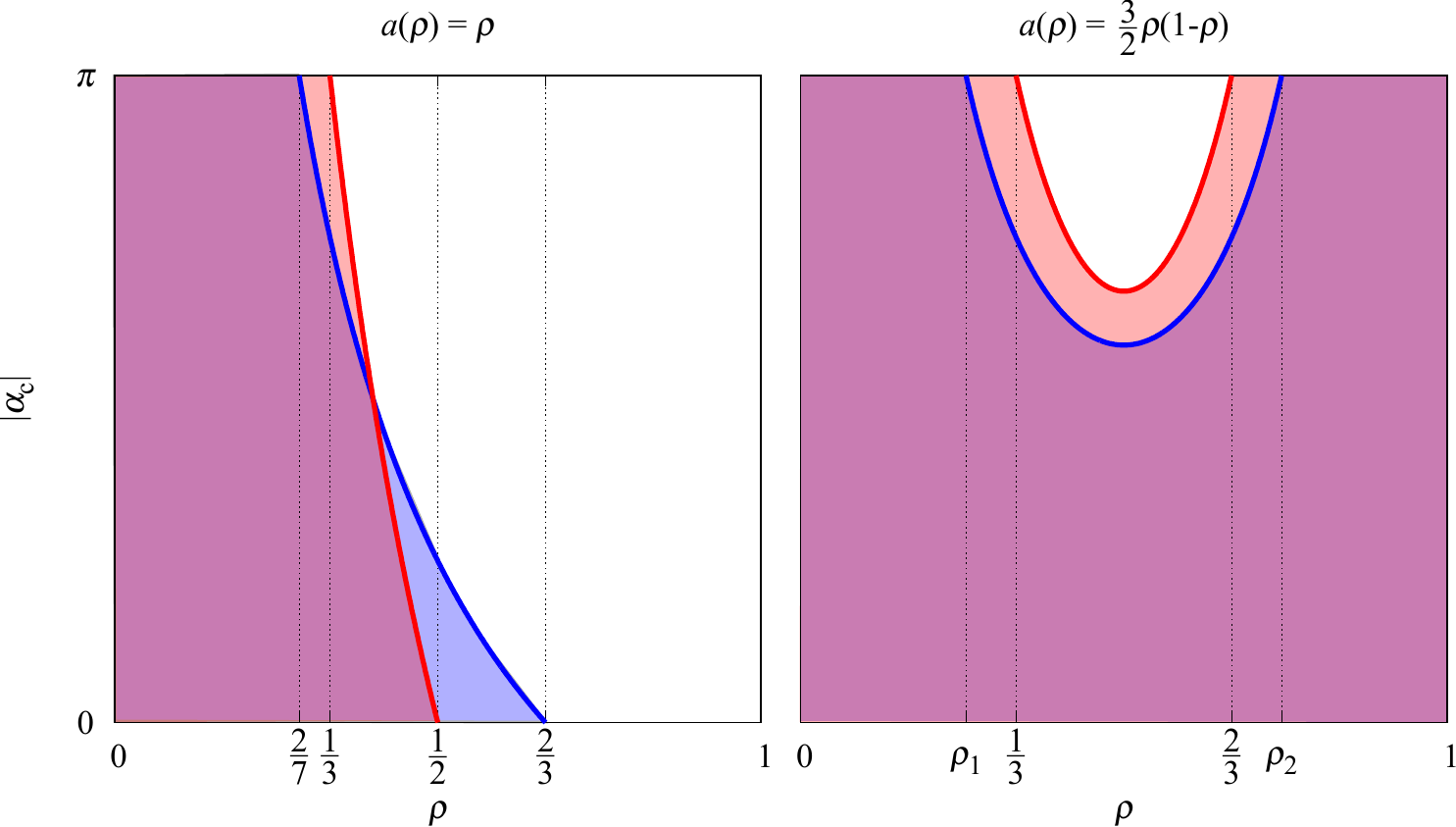}
\caption{Conditions~\eqref{eq:cond.align.Boltz} (blue) with $\bar{\theta}_0=\frac{\pi}{2}$ and~\eqref{eq:cond.align.mean_field} (red) for two different choices of $a(\rho)$. Purple areas correspond to the pairs $(\rho,\,\abs{\alpha_c})$ satisfying both conditions. The values $\rho_1$, $\rho_2$ in the right graph are $\rho_{1,2}=\frac{21\mp\sqrt{105}}{42}$, i.e. $\rho_1\approx 0.256$, $\rho_2\approx 0.744$.}
\label{fig:comp_conditions}
\end{center}
\end{figure}

In Figure~\ref{fig:comp_conditions} we consider the cases $a(\rho)=\rho$ and $a(\rho)=\frac{3}{2}\rho(1-\rho)$ already discussed at the very beginning of Section~\ref{sect:spat.homog}. In the former we see that either condition may be more or less restrictive depending on the considered density range. In the latter we observe instead that the mean-field interaction model may guarantee alignment in a range of values of $\alpha_c$ invariably larger than the corresponding one of the binary interaction model. Finally, in both cases we notice that the density range for \emph{unconditional alignment}, i.e. alignment for every $\alpha_c\in [-\pi,\,\pi)$, is typically larger in the case of the mean-field interaction model.
\end{remark}

We should stress that the considerations proposed in Remark~\ref{rem:comp_conditions} apply \emph{a priori} to the two types of model, but alignment in situations not encompassed by either~\eqref{eq:cond.align.Boltz} or~\eqref{eq:cond.align.mean_field} can still be observed \emph{a posteriori}. In fact these conditions are on one hand quite general, in that they are not too bound to the specific form of the initial distribution $f_0$, but on the other hand, as already observed from the numerical simulations of the binary interaction model, cf. Section~\ref{sect:spat.homog.Boltz_numsim}, only sufficient. In the case of the mean-field interaction model, we prove that particular initial distributions might produce alignment also in cases not explicitly covered by Theorem~\ref{theo:align.mean_field}.

\begin{corollary}
Let $f_0=\rho\delta_\alpha\in\cM^\rho_+(I)$, $\alpha\in I$, and let $f(t)\in\cM^\rho_+(I)$ be the corresponding solution to~\eqref{eq:quasinv_limit.strong} at time $t>0$. Then
$$ \lim_{t\to+\infty}W_1(f(t),\,\rho\delta_{\alpha_d})=0 $$
for all $\rho\in [0,\,1]$ and all $\alpha_c\in [-\pi,\,\pi)$.
\end{corollary}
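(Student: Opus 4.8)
The plan is to exploit the special structure of a Dirac initial datum under the push-forward representation~\eqref{eq:pushfwd}: a solution issuing from a single atom remains a single atom for all time, and the location of that atom obeys a scalar \emph{linear} ODE in which the non-local (hence $\alpha_c$-dependent) part of the transport speed~\eqref{eq:H[f]} is switched off. This is precisely why the conclusion holds for every $\alpha_c$ and every $\rho$, with no condition whatsoever.

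First I would dispose of the degenerate case $\rho=0$: then $f_0=\rho\delta_\alpha$ is the null measure, so $f(t)\equiv 0=\rho\delta_{\alpha_d}$ and $W_1(f(t),\rho\delta_{\alpha_d})=0$ for all $t\geq0$. Assume henceforth $\rho>0$. By~\eqref{eq:pushfwd}, $f(t)=\gamma_t\#(\rho\delta_\alpha)=\rho\delta_{\gamma_t(\alpha)}$, so the distribution stays concentrated at the single angle $\gamma_t(\alpha)$. Inserting $f(t)=\rho\delta_{\gamma_t(\alpha)}$ into~\eqref{eq:H[f]} and evaluating the transport speed at $\theta_\ast=\gamma_t(\alpha)$, the integral $\int_I\cG(\abs{\gamma_t(\alpha)-\varphi_\ast})f(t,\varphi_\ast)\,d\varphi_\ast=\rho\,\cG(0)=0$ vanishes, because $\cG(0)=0$. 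Hence the characteristic equation~\eqref{eq:characteristics} through $\alpha$ reduces to the linear ODE
\[ \dot\gamma_t(\alpha)=\rho(\alpha_d-\gamma_t(\alpha)), \qquad \gamma_0(\alpha)=\alpha, \]
whose unique solution is $\gamma_t(\alpha)=\alpha_d+(\alpha-\alpha_d)e^{-\rho t}$. Since $\alpha\in I$ forces $\alpha-\alpha_d\in[-\pi,\pi)$ and $I$ is centred at $\alpha_d$, this straight-line relaxation never leaves $I$, so no $2\pi$-wrapping correction is needed and $\gamma_t(\alpha)\to\alpha_d$ as $t\to+\infty$.

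To conclude, exactly as in the proof of Theorem~\ref{theo:align.mean_field} the product $f(t,\theta)\otimes\rho\delta_{\alpha_d}(\varphi)$ is an admissible transference plan in $\Pi(f(t),\rho\delta_{\alpha_d})$, so
\[ W_1(f(t),\rho\delta_{\alpha_d})\leq\rho\int_I\abs{\theta-\alpha_d}f(t,\theta)\,d\theta=\rho^2\abs{\gamma_t(\alpha)-\alpha_d}=\rho^2\abs{\alpha-\alpha_d}\,e^{-\rho t}\xrightarrow{t\to+\infty}0. \]

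The only point requiring a little care — and the nominal obstacle — is the justification that a Dirac datum produces a Dirac solution together with the claim that $\gamma_t(\alpha)$ never exits $I$; both are immediate here, the former from the push-forward formula~\eqref{eq:pushfwd} and the latter from monotone relaxation toward the centre $\alpha_d$ of $I$. The striking feature worth highlighting in the write-up is that the deviation angle $\alpha_c$ disappears entirely from the dynamics once $f$ is a single atom, which is exactly what renders the alignment unconditional in this case, in contrast with Theorem~\ref{theo:align.mean_field}.
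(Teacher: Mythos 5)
Your proof is correct and follows essentially the same route as the paper's: you observe that the characteristic through $\alpha$ obeys the linear ODE $\dot\gamma_t(\alpha)=-\rho(\gamma_t(\alpha)-\alpha_d)$ because $\cG(0)=0$ kills the non-local term, solve it explicitly, and conclude with the same Wasserstein estimate as in Theorem~\ref{theo:align.mean_field}. The extra remarks (the $\rho=0$ case, the atom remaining an atom, the absence of $2\pi$-wrapping) are harmless refinements of the paper's argument, not a different method.
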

\begin{proof}
The equation of the characteristics for the given $f_0$ is
$$ \dot{\gamma}_t(\theta)=-\rho(\gamma_t(\theta)-\alpha_d)+a(\rho)\rho\cG(\abs{\gamma_t(\theta)-\gamma_t(\alpha)})(\gamma_t(\theta)-\alpha_d+\alpha_c). $$
In particular, for $\theta=\alpha$ we obtain
$$ \dot{\gamma}_t(\alpha)=-\rho(\gamma_t(\alpha)-\alpha_d), $$
which, together with $\gamma_0(\alpha)=\alpha$, gives $\gamma_t(\alpha)=\alpha_d+(\alpha-\alpha_d)e^{-\rho t}$. Using this in the calculations of the proof of Theorem~\ref{theo:align.mean_field} yields finally
$$ W_1(f(t),\,\rho\delta_{\alpha_d})\leq\rho\int_I\abs{\gamma_t(\theta)-\alpha_d}f_0(\theta)\,d\theta=\rho^2\abs{\gamma_t(\alpha)-\alpha_d}
	=\rho^2\abs{\alpha-\alpha_d}e^{-\rho t}, $$
whence the thesis follows.
\end{proof}

We conclude that the extra microscopic information brought by the kinetic distribution function $f_0$ may be essential to get a full understanding of the alignment dynamics in particular cases.

\subsection{Numerical simulations}
\label{sect:hydrodynamic_numsim}
Next we illustrate the time-asymptotic evolution of the solution to~\eqref{eq:quasinv_limit.strong} by various numerical experiments. To discretise the equation we use the semi-Lagrangian scheme described in Appendix~\ref{app:semi-lagrangian}, which has good stability and accuracy properties even in the case of non-local fluxes.

\begin{figure}[!t]
\begin{center}
\includegraphics[width=0.9\textwidth]{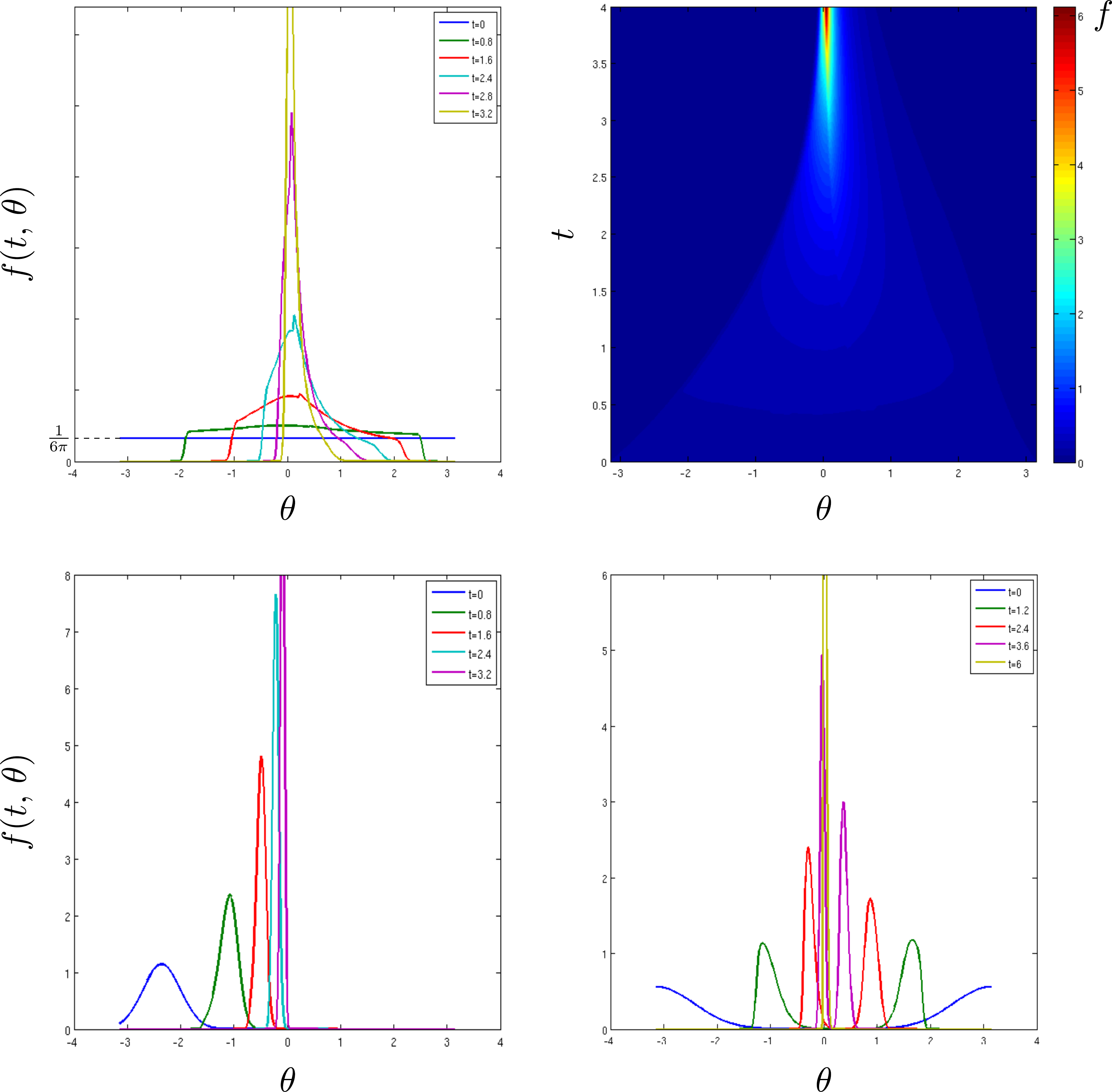}
\caption{\textbf{Top left}: snapshots in time of the solution to~\eqref{eq:quasinv_limit.strong} with initial condition~\eqref{eq:f0.uniform}. \textbf{Top right}: the same solution plotted in the $\theta t$-plane. \textbf{Bottom}: snapshots in time of the solution to~\eqref{eq:quasinv_limit.strong} with initial condition~\eqref{eq:f0.Gaussian} for two different choices of $\hat{\theta}$, $\sigma^2$.}
\label{fig:align_meanfield}
\end{center}
\end{figure}

We consider the uniform initial distribution~\eqref{eq:f0.uniform} with density $\rho=\frac{1}{3}$ and we take $\alpha_d=0$, $a(\rho)=\rho$, $\alpha_c=\frac{\pi}{3}$. In this case, consensus towards $\alpha_d$ is asymptotically expected because, as shown in the left panel of Figure~\ref{fig:comp_conditions}, these parameters fall in the range of unconditional alignment. The corresponding time evolution of the kinetic distribution function $f$ is displayed in the top row of Figure~\ref{fig:align_meanfield}.

Changing the initial condition, for instance using the function~\eqref{eq:f0.Gaussian} with various choices of the parameters $\hat{\theta}$, $\sigma^2$, does not affect the asymptotic behaviour of the system, cf. the bottom row of Figure~\ref{fig:align_meanfield}. This is indeed consistent with the theoretical prediction of Theorem~\ref{theo:align.mean_field}.

Similarly to the numerical tests presented in Section~\ref{sect:spat.homog.Boltz_numsim}, we also perform a parametric study of the trends of the mean-field interaction model for large times by computing a Finite-Element-type approximation, say $\thetaFEM$, of the quantity $\bar{\theta}(t)$. Specifically, we compute  the integral in~\eqref{eq:bartheta} by the rectangle method over a grid of points $\{\theta_i\}_{i=1}^{M}\subset I$ with step $\Delta\theta:=\frac{2\pi}{M+1}$, see Appendix~\ref{app:semi-lagrangian} for further details:
$$ \thetaFEM(t):=\frac{1}{\rho}\sum_{i=1}^M\abs{\theta_i-\alpha_d}f(t,\,\theta_i)\Delta\theta. $$

\begin{figure}[!t]
\begin{center}
\includegraphics[width=0.8\textwidth]{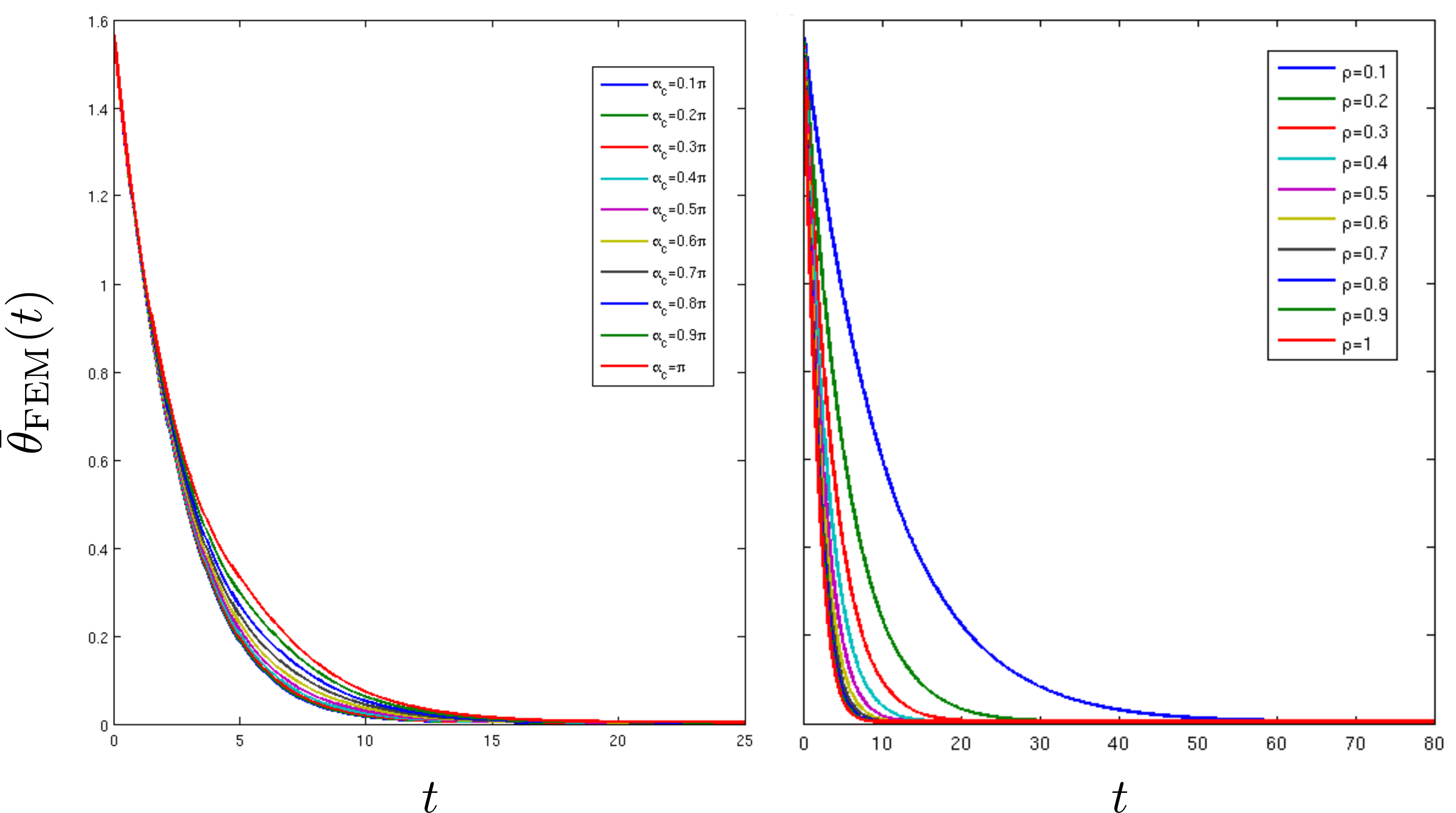}
\caption{The mapping $t\mapsto\thetaFEM(t)$ for: (\textbf{left}) $\rho=\frac{1}{2}$ and different choices of $\alpha_c\in (0,\,\pi]$; (\textbf{right}) $\alpha_c=\frac{\pi}{5}$ and different choices of $\rho\in (0,\,1]$. Alignment is always observed.}
\label{fig:thetabarFEM.parameters}
\end{center}
\end{figure}

In Figure~\ref{fig:thetabarFEM.parameters} we illustrate the evolution of the mapping $t\mapsto\thetaFEM(t)$ for:
\begin{enumerate}[(i)]
\item $\rho=\frac{1}{2}$ and setting $\alpha_c=\frac{k}{10}\pi$ for $k=1,\,\dots,\,10$, cf. Figure~\ref{fig:thetabarFEM.parameters} (left);
\item $\alpha_c=\frac{\pi}{5}$ and setting $\rho=\frac{k}{10}$ for $k=1,\,\dots,\,10$, cf. Figure~\ref{fig:thetabarFEM.parameters} (right).
\end{enumerate}
Case (i) is completely not covered by condition~\eqref{eq:cond.align.mean_field}; case (ii), instead, is covered for $\rho<\frac{5}{11}\approx 0.45$. In both cases, however, we do not find any numerical evidence that the alignment fails for some choices of the parameters. As already discussed, this is not in contrast with condition~\eqref{eq:cond.align.mean_field}, which indeed is only sufficient.

\section{The spatially inhomogeneous problem}
\label{sect:spat.inhom}
In the spatially inhomogeneous case one considers a generic distribution in space of the pedestrians, which can vary from point to point $x\in\R^2$ and over time. Therefore the kinetic distribution function fully depends on both state variables $(x,\,\theta)$, which implies that the density of the crowd:
$$ \rho(t,\,x):=\int_If(t,\,x,\,\theta)\,d\theta $$
is no longer a constant parameter of the model. However, from~\eqref{eq:Boltz}-\eqref{eq:Q} we see that the total mass of the system, namely the quantity $\int_I\int_{\R^2}f(t,\,x,\,\theta)\,dx\,d\theta=\int_{\R^2}\rho(t,\,x)\,dx$, is conserved in time because $\int_I\int_{\R^2}Q(f,\,f)(t,\,x,\,\theta)\,dx\,d\theta=0$ for all $t>0$. In the following we consider the case $\int_I\int_{\R^2}f(t,\,x,\,\theta)\,dx\,d\theta=1$, hence $f$ can be understood as a probability density.

In this section we explore computationally the space inhomogeneous problem by means of the binary collision model, cf.~\eqref{eq:Boltz}-\eqref{eq:Q} or~\eqref{eq:Boltz.weak}, which we simulate using a Nanbu-like Monte Carlo particle method, see Appendix~\ref{app:nanbu} for details. We show that the microscopic model of walking behaviour discussed in Sections~\ref{sect:micro.coll},~\ref{sect:kin_mod}, though much simplified with respect to other sophisticated models such as e.g.~\cite{degond2013KRM}, is nonetheless able to reproduce various emergent macroscopic patterns while being more realistic from the phenomenological point of view than the interaction models based on position-dependent repulsion potentials.

We calculate the collision probability using~\eqref{eq:prob.coll}, where the time to collision $t(x,\,y,\,\theta_\ast,\,\varphi_\ast)$ is given by~\eqref{eq:tij.polar} when the interacting pairs $(x,\,\theta_\ast)$, $(y,\,\varphi_\ast)$ are such that $\abs{x-y}>\gamma$ and $\Delta\geq 0$ in~\eqref{eq:Delta.polar}. Otherwise we set the time to collision to either $0$ or $O(10^5)$.

In the case studies addressed here we consider a two-dimensional bounded spatial domain, specifically the square $\Omega:=[-\frac{L}{2},\,\frac{L}{2}]\subset\R^2$ with edge length $L>0$ and periodic boundary conditions. Table~\ref{tab:parameters} states all the relevant parameters used in the simulations.

\begin{table}[!t]
\caption{Parameters used in the numerical simulations of Section~\ref{sect:spat.inhom}.}
\begin{tabular}{l|cccccccc}
	& $L$ & $\alpha_d$ & $\alpha_d^1$ & $\alpha_d^2$ & $\alpha_c$ & $\tau$ & $\gamma$ & $\eta$ \\
	\hline\hline
	Test 1 & $10$ & $\pi$ & $-$ & $-$ & $\frac{\pi}{4}$ & $1$ & $\frac{1}{2}$ & $-$ \\[2mm]
	Test 2 & $10$ & $-$ & $0$ & $-\pi$ & $\frac{\pi}{4}$ & $1$ & $\frac{1}{2}$ & $\frac{1}{5}$ \\[2mm]
	Test 3 & $10$ & $-$ & $0$ & $\frac{\pi}{2}$ & $\frac{\pi}{4}$ & $10$ & $\frac{2}{5}$ & $\frac{1}{5}$ \\[1mm]
	\hline
\end{tabular}
\label{tab:parameters}
\end{table}

\subsection{Test 1 -- Alignment}
\begin{figure}[!t]
\begin{center}
\includegraphics[width=\textwidth]{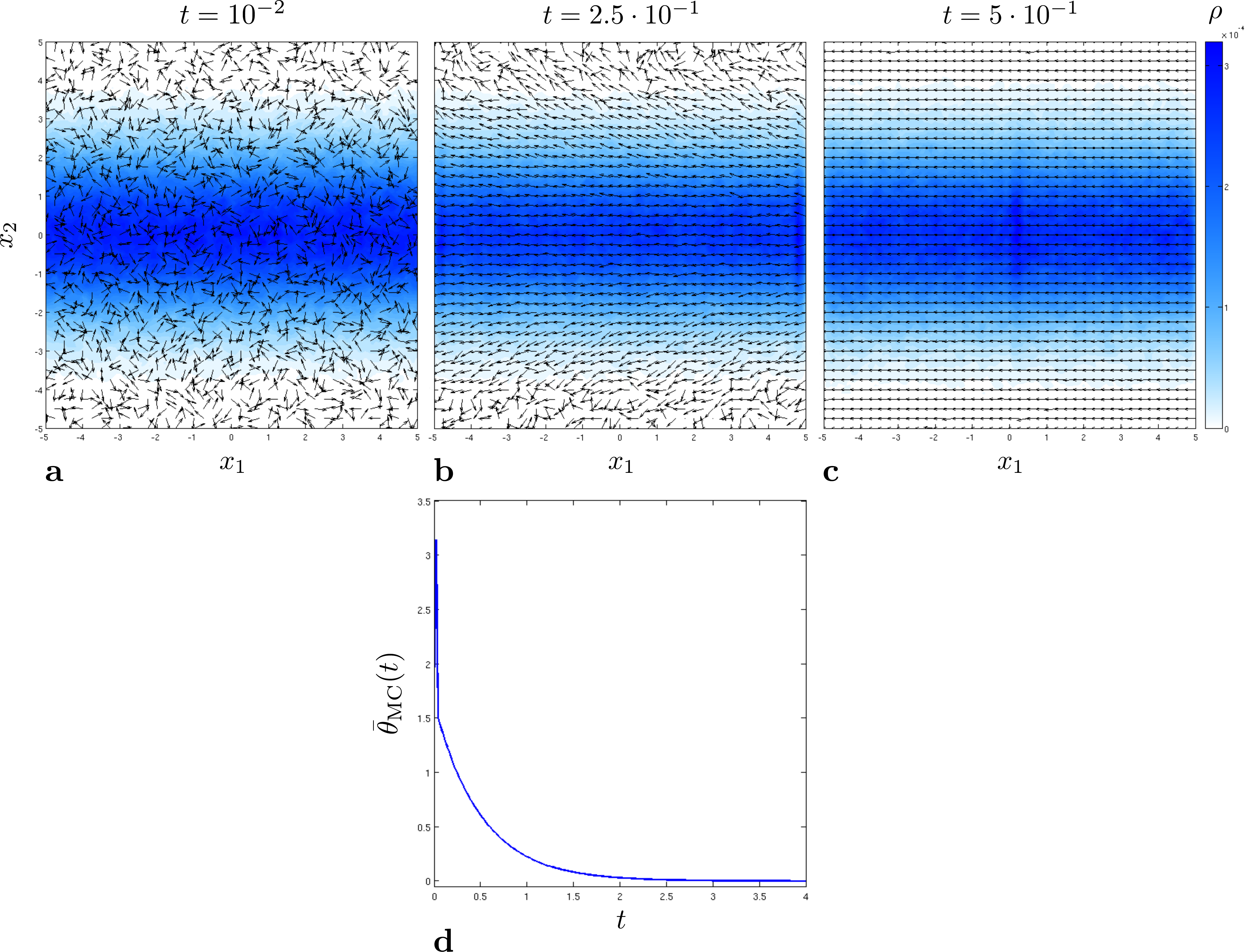}
\caption{Test 1 -- Alignment in the spatially inhomogeneous case and time trend of $\thetaMC$ over the whole space domain $\Omega$.}
\label{fig:align.onepop.spatinhomog}
\end{center}
\end{figure} 

In the first test we consider a similar situation as for the space homogeneous model in Section~\ref{sect:spat.homog}. We choose an initial condition of the form
$$ f_0(x,\,\theta)=\frac{1}{2\pi L}g(x_2) \quad \text{with} \quad
	g(x_2)=\frac{1}{\sqrt{2\pi}}\sum_{n=-\infty}^{+\infty}e^{-\frac{(x_2+nL)^2}{2}},\quad x_2\in\left[-\frac{L}{2},\,\frac{L}{2}\right), $$
such that $\int_I\int_\Omega f_0(x,\,\theta)\,dx\,d\theta=1$, which corresponds to a crowd concentrated mainly in a horizontal middle stripe of $\Omega$ with walking direction uniformly distributed in $I$, see Figure~\ref{fig:align.onepop.spatinhomog}a.

Like in the space homogeneous case, the pedestrians tend to align group-wise to the desired direction in finite time (cf. Figure~\ref{fig:align.onepop.spatinhomog}c). However, the simulation shows that such a consensus is faster where the density is higher (cf. Figure~\ref{fig:align.onepop.spatinhomog}b). This can indeed be inferred also from the results of the space homogeneous case, cf. the central panel of Figure~\ref{fig:thetabarMC.parameters}. Moreover, we notice that during the transition from the initial condition to the consensus the individuals tend to move from high to low density areas due to sidestepping for collision avoidance (cf. the upper and lower white stripes in Figure~\ref{fig:align.onepop.spatinhomog}b). Such an effect appears to be top/bottom-symmetric despite the fact that the deviation angle $\alpha_c$ is constant across the domain (in this case $\alpha_c>0$, cf. Table~\ref{tab:parameters}, corresponding to leftwards stepping).

This may explain the early non-monotone trend of the mapping $t\mapsto\thetaMC(t)$, cf. Figure~\ref{fig:align.onepop.spatinhomog}d, which in the present case is computed as a Monte Carlo approximation of
$$ \bar\theta(t):=\int_I\int_{\Omega}\abs{\theta-\alpha_d}f(t,\,x,\,\theta)\,dx\,d\theta. $$
In particular, the steep initial rise might be due to the fact that most individuals in the top area of the domain have to span anticlockwise all the walking directions in the interval $I$ before finding one free from collisions with neighbouring pedestrians. The subsequent exponential-like decay of $\thetaMC$ is instead consistent with the observations made in the spatially homogeneous case.

\subsection{Test 2 -- Counterflow}

A very popular problem in crowd dynamics is the counterflow, i.e. the case of two groups walking towards each other. In order to simulate this situation we consider an extension of model~\eqref{eq:Boltz} to two groups of pedestrians, each described by its own distribution function $f^p=f^p(t,\,x,\,\theta)$ which satisfies the equation
\begin{equation}
	\partial_tf^p+v\cdot\nabla_x{f^p}=Q(f^p,\,f^p)+Q(f^p,\,f^q), \qquad p,\,q\in\{1,\,2\},\ p\neq q.
	\label{eq:Boltz.twopop}
\end{equation}
The additional collisional term $Q(f^p,\,f^q)$ at the right-hand side takes into account the interactions of either group with the opposite one.

We assume that the two groups differ only in the desired directions, which we take to be oriented rightwards and leftwards, respectively. Hence each collisional term in~\eqref{eq:Boltz.twopop} implements an interaction rule of the form~\eqref{eq:int.rules} but with $\alpha_d$ replaced by $\alpha_d^p$, cf. Table~\ref{tab:parameters}. We consider for both groups an identical initial distribution
\begin{equation}
	f^p_0(x,\,\theta)=\frac{1}{2\pi\eta L^2}\chi_{[-\eta\frac{L}{2},\,\eta\frac{L}{2}]}(x_2), \qquad p=1,\,2,
	\label{eq:fp0}
\end{equation}
which is uniform in $\theta\in I$ and in $x\in\left[-\frac{L}{2},\,\frac{L}{2}\right]\times \left[-\eta\frac{L}{2},\,\eta\frac{L}{2}\right]\subset\Omega$, $0<\eta\leq 1$, and is such that $\int_I\int_{\Omega}f^p_0(x,\,\theta)\,dx\,d\theta=1$. Hence the two crowds share initially the same area of the domain and are well mixed therein (Figure~\ref{fig:twopop_lanes}a).

\begin{figure}[!t]
\begin{center}
\includegraphics[width=\textwidth]{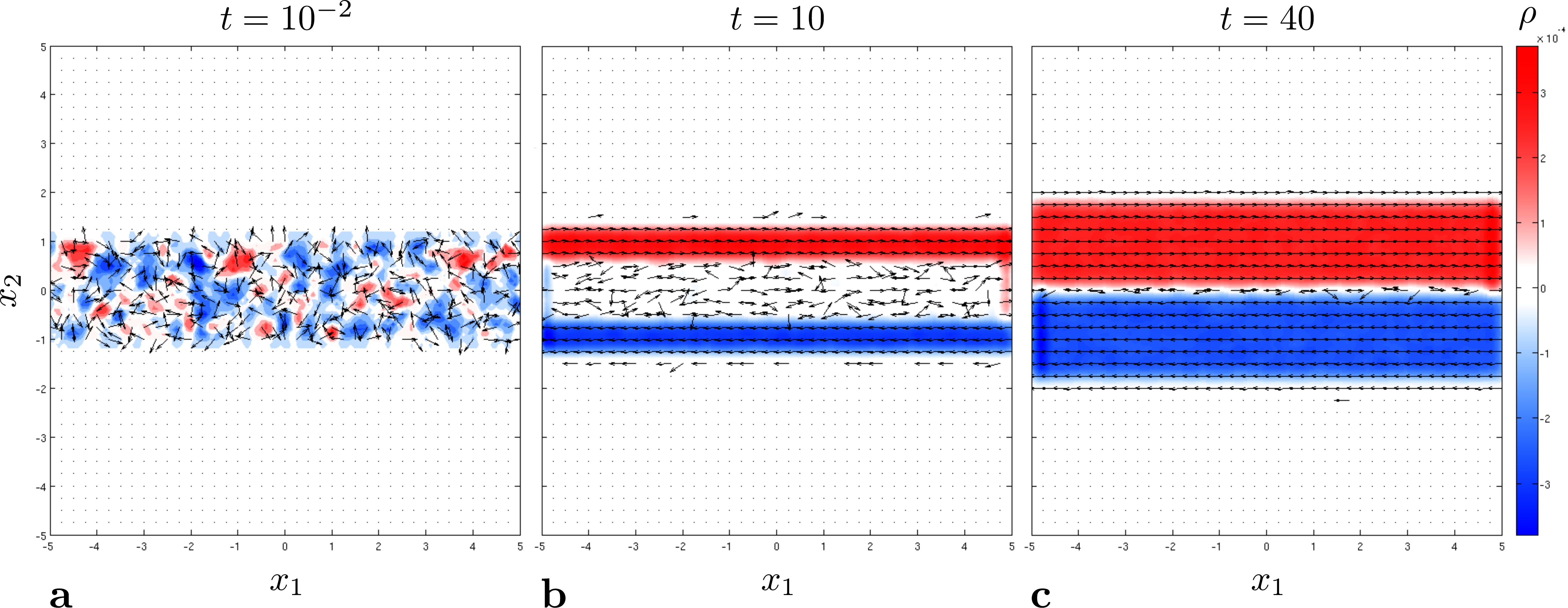}
\caption{Test 2 -- Lane formation in pedestrian counterflow. For pictorial purposes, in order to represent the two crowds in the same picture, the blue density is plotted on a \emph{negative} scale. In white areas where the velocity is non-zero the two densities take nearly the same values.}
\label{fig:twopop_lanes}
\end{center}
\end{figure}

The simulation shows that the two groups fully segregate, see Figure~\ref{fig:twopop_lanes}. A similar behaviour has been ovserved in the literature for different types of models, cf. e.g.~\cite{burger2016SIMA,cristiani2010CHAPTER,cristiani2014BOOK,moussaid2010PLOS}. Furthermore, within either lane each group aligns to its desired direction. As a matter of fact, this is the optimal way for pedestrians to avoid collisions with others.

In the early stages of the segregation (Figure~\ref{fig:twopop_lanes}b) the lanes are thinner and well separated by an area in which pedestrians with opposite desired directions are still mixed. In the long run, when either group aligns to its desired direction (Figure~\ref{fig:twopop_lanes}c), the lanes become wider and the groups clearly segregate.

\subsection{Test 3 -- Crossing flows}

We now consider two groups of pedestrians having perpendicular desired directions. As initial conditions we prescribe the distribution functions
$$ f^p_0(x,\,\theta)=\frac{1}{2\pi\eta L^2}\chi_{[-\eta\frac{L}{2},\,\eta\frac{L}{2}]}(x_q), \qquad p,\,q=1,\,2,\ p\neq q, $$
which are analogous to~\eqref{eq:fp0} but for the fact that here the smaller edge of the initial stripe is in the $x_q$-direction. Hence the group $p=1$ is distributed horizontally and walks rightwards while the group $p=2$ is distributed vertically and walks upwards, see Figure~\ref{fig:twopop_stripes}a and cf. Table~\ref{tab:parameters}.

The snapshots in the top row of Figure~\ref{fig:twopop_stripes} show that the pedestrians anticipate the interactions by starting to step aside slightly before they reach the area in which the two streams cross. As a result, the macroscopic flows deviate from their initial orthogonal directions thereby giving rise to quite realistic patterns. 

\begin{figure}[!t]
\begin{center}
\includegraphics[width=\textwidth]{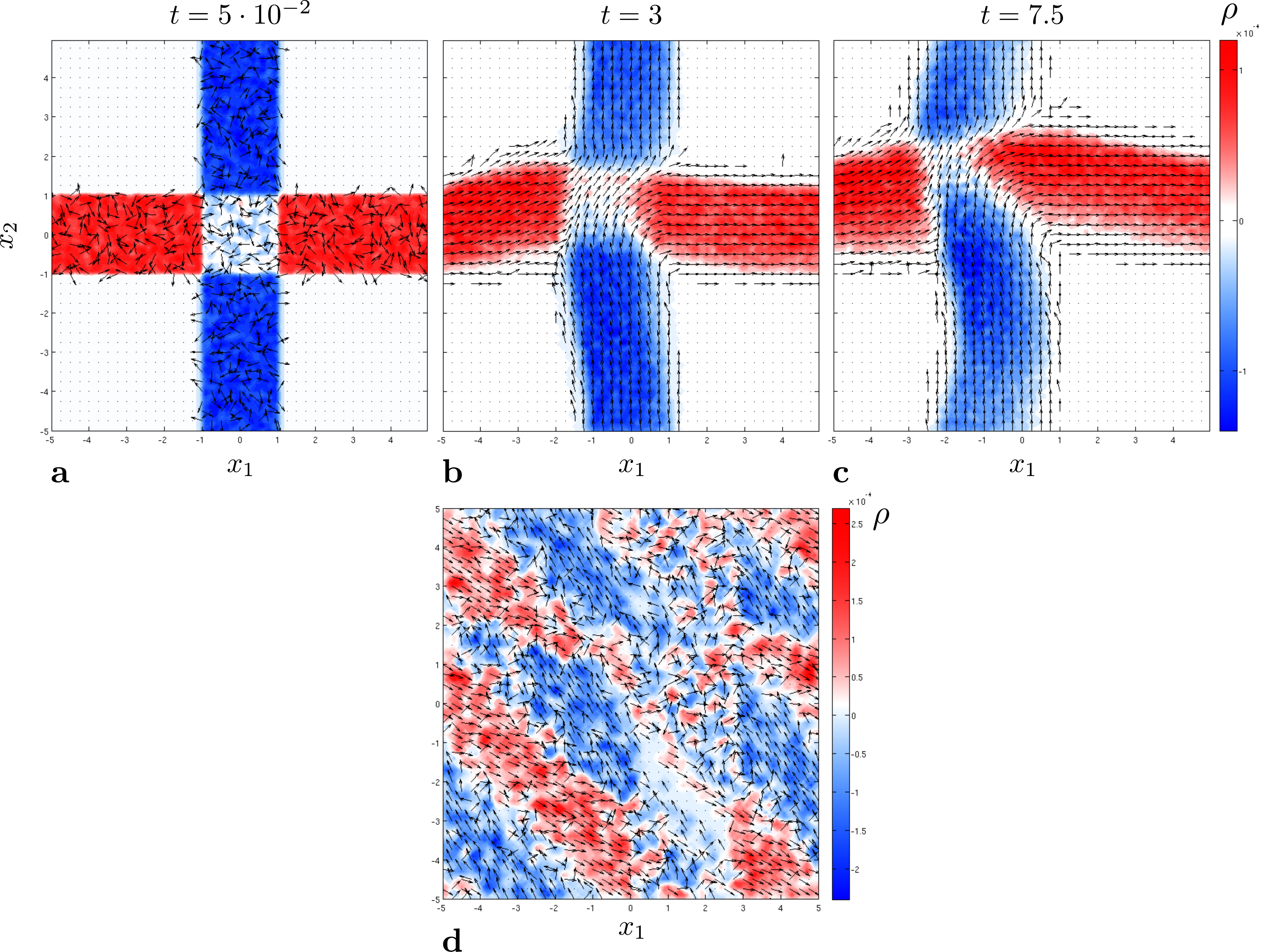}
\caption{Test 3 -- Crossing flows and stripe formation.}
\label{fig:twopop_stripes}
\end{center}
\end{figure}

In order to better investigate the dynamics in the crossing area, we consider at last two groups of pedestrians, with the same perpendicular desired directions as before, that are now uniformly distributed in the whole spatial domain $\Omega$. As Figure~\ref{fig:twopop_stripes}d shows, when they begin to interact we observe the emergence of segregation with \emph{stripe formation}, a phenomenon well-documented in the experimental literature~\cite{helbing2005TS,schadschneider2011NHM,zhang2014PHYSA}. The slope of the stripes corresponds to an angle between $\alpha_d^1$ and $\alpha_d^2$, which in the present case seems to be close to $\frac{\alpha_d^1+\alpha_d^2}{2}=\frac{\pi}{4}$. However, further analytical study of the model is needed to possibly confirm this guess.

\appendix
\section{Numerical tools}
\label{app:num.tools}
In this appendix we provide more details about the numerical schemes that we used for simulating the Boltzmann-type collisional model (both homogeneous and inhomogeneous in space) and the hydrodynamic mean-field model deduced from the former in the quasi-invariant direction limit.

\subsection{Nanbu-like Monte Carlo algorithm}
\label{app:nanbu}
The Nanbu algorithm is a particle method belonging to the family of the Monte Carlo numerical methods for the approximate solution of collisional kinetic equations. We used it for producing the simulations presented in Sections~\ref{sect:spat.homog.Boltz_numsim} and~\ref{sect:spat.inhom}.

Here we follow the approach presented in~\cite{albi2013MMS,pareschi2013BOOK}, see also~\cite{bobylev2000PRE}, to approximate the spatially inhomogeneous Boltzmann-type equation~\eqref{eq:Boltz} with microscopic states $(x,\,\theta)\in\mathbb{R}^3$ and $v=V_0(\cos{\theta}\mathbf{i}+\sin{\theta}\mathbf{j})$. The reader can easily adapt this description to the spatially homogeneous case, when the transport term $v\cdot\nabla_x{f}$ drops.

The algorithm is based on a splitting procedure: a first collisional step is followed by a transport step according to
\begin{subequations}
\begin{align}
	& \partial_t f(t,\,x,\,\theta)=Q(f,\,f)(t,\,x,\,\theta) \label{eq:MC.coll} \\[2mm]
	& \partial_t f(t,\,x,\,\,\theta)+v\cdot\nabla_x{f}(t,\,x,\,\theta)=0. \label{eq:MC.transp}
\end{align}
\end{subequations}

If we assume that, up to normalisation, the initial datum $f_0=f_0(x,\,\theta)$ is a probability density, i.e. $\int_I\int_{\R^2}f_0(x,\,\theta)\,dx\,d\theta=1$, then, since the total mass is conserved, $f(t,\,x,\,\theta)$ is in turn a probability density in $(x,\,\theta)$ for all $t>0$. Hence we can rewrite the operator $Q$ involved in the collisional step as, cf.~\eqref{eq:Q},
$$ Q(f,\,f)(t,\,x,\,\theta)=Q^+(f,\,f)(t,\,x,\,\theta)-f(t,\,x,\,\theta), $$
where
$$ Q^+(f,\,f)(t,\,x,\,\theta):=\int_I\int_{\R^2}\frac{1}{\abs{J}}f(t,\,x,\,\theta_\ast)f(t,\,y,\,\varphi_\ast)\,dy\,d\varphi $$
is the so-called \emph{gain operator}, which implements the interaction rule. Still because of mass conservation, or equivalently $\int_I\int_{\R^2}Q(f,\,f)(t,\,x,\,\theta)\,dx\,d\theta=0$, also the gain operator $Q^+(f,\,f)(t,\,x,\,\theta)$ is a probability density in $(x,\,\theta)$ for all $t>0$. 

\begin{algorithm}[!t]
	\caption{Nanbu-like algorithm for~\eqref{eq:Boltz}}
	\begin{algorithmic}[1]
		\STATE fix $N>1$, $\Delta{t}\in (0,\,1]$
		\STATE define the random variable $\xi\sim\text{Bernoulli}(\Delta{t})$
		\FOR{$n=0,\,1,\,2,\,\dots$}
			\STATE sample $N$ particles with states $\{(X_i^n,\,\Theta_i^n)\}_{i=1}^{N}$ from the distribution $f^n$
			\FOR{$i=1$ \TO $N$}
				\STATE sample a value of $\xi\in\{0,\,1\}$
				\IF{$\xi=0$}
					\STATE set $\Theta_i^{n+1}=\Theta_i^n$
				\ELSIF{$\xi=1$}
					\STATE sample uniformly an index $j\in\{1,\,\dots,\,N\}\setminus\{i\}$
					\STATE set $\Theta_i^{n+1}=\scrC(X_i^n,\,X_j^n,\,\Theta_i^n,\,\Theta_j^n)\mod 2\pi$, cf.~\eqref{eq:int.rules}
				\ENDIF
				\STATE set $X_i^{n+1}=X_i^n+V_0\left(\cos{\Theta_i^n}\mathbf{i}+\sin{\Theta_i^n}\mathbf{j}\right)\Delta{t}$
			\ENDFOR
			\STATE reconstruct (an approximation of) $f^{n+1}$ from the samples $\{(X_i^{n+1},\,\Theta_i^{n+1})\}_{i=1}^{N}$
		\ENDFOR
	\end{algorithmic}
	\label{alg:nanbu}
\end{algorithm}

Equation~\eqref{eq:MC.coll} is discretised in time by the forwards Euler scheme to get
\begin{equation}
	f^{n+1}(x,\,\theta)=(1-\Delta{t})f^n(x,\,\theta)+\Delta{t}Q^+(f^n,\,f^n)(x,\,\theta),
	\label{eq:MC.fwdEul}
\end{equation}
where $\Delta{t}>0$ is the time step and $f^n$ is an approximation of $f$ at time $t_n:=n\Delta{t}$, $n=0,\,1,\,2,\,\dots$.

Under the constraint $\Delta{t}\leq 1$, from~\eqref{eq:MC.fwdEul} we see that $f^{n+1}$ is a convex linear combination of $f^n$ and $Q^+(f^n,\,f^n)$, hence it is in turn a probability density. This is the key to give~\eqref{eq:MC.fwdEul} the following probabilistic interpretation in terms of the underlying microscopic particle system: to obtain samples distributed according to $f^{n+1}$ we have to sample either from $f^n$, with probability $1-\Delta{t}$, or from $Q^+(f^n,\,f^n)$, with probability $\Delta{t}$. As a matter of fact, in~\eqref{eq:MC.fwdEul} $f^n$ is related to the event that no binary interactions take place during the time $\Delta{t}$, so that the kinetic distribution does not change in the time step $n\to n+1$, while $Q^+(f^n,\,f^n)$ is related to the event that an interaction occurs.

Therefore, after sampling from $f^n$ a certain number $N$ of particles with states $\{(X_i^n,\,\Theta_i^n)\}_{i=1}^{N}$, with probability $1-\Delta{t}$ we set $\Theta_i^{n+1}=\Theta_i^n$, i.e. we leave the direction of the $i$th particle unchanged. Conversely, with probability $\Delta{t}$ we:
\begin{enumerate}[(i)]
\item select uniformly a second particle $(X_j^n,\,\Theta_j^n)$, $j\in\{1,\,\dots,\,N\}$, with $j\ne i$;
\item update $\Theta_i^n$ to $\Theta_i^{n+1}$ according to the binary interaction rule~\eqref{eq:int.rules} applied to the $i$th and $j$th particles.
\end{enumerate}

In the subsequent transport step we update the positions of the sampled particles. Owing to~\eqref{eq:MC.transp}, we have:
$$ X_i^{n+1}=X_i^n+V_0\left(\cos{\Theta_i^n}\mathbf{i}+\sin{\Theta_i^n}\mathbf{j}\right)\Delta{t}, \qquad i=1,\,\dots,\,N. $$

The whole method is summarised in Algorithm~\ref{alg:nanbu} in the form of a pseudo-code. 

\subsection{Semi-Lagrangian scheme for conservation laws with non-local flux}
\label{app:semi-lagrangian}
In recent works, see e.g.~\cite{crouseilles2010JCP,huang2016JCP}, semi-Lagrangian (SL) schemes have been proved efficient in approximating conservation laws. Since these numerical methods are not classical and need some adaptations to our case, we report here a short description of their derivation for the reader's convenience. In particular, following~\cite{carlini2016IFAC,carlini2016DGA}, see also~\cite{piccoli2013AAM,tosin2011NHM}, we describe a SL scheme for the numerical approximation of the following problem, cf. Section~\ref{sect:hydrodynamic_numsim}: 
 \begin{equation}
	\begin{cases}
		\partial_tf+\partial_\theta(H[f]f)=0 & (\theta,\,t)\in I\times (0,\,T] \\[1mm]
		f(0,\,\theta)=f_0(\theta) & \theta\in I,
	\end{cases}
	\label{eq:cons.law}
\end{equation}
where $T>0$ is a final time, $I=[-\pi+\alpha_d,\,\pi+\alpha_d)$ is the interval introduced in Remark~\ref{rem:alpha_d} with periodic boundary conditions and $H[f]=H[f](t,\,\theta):(0,\,T]\times I\to\R$ is a sufficiently smooth and bounded velocity, which possibly depends on the unknown $f$ in a non-local manner.

To derive the scheme, we begin by fixing a time step $\Delta{t}>0$ through which we define the discrete time instants $t_n:=n\Delta{t}$, $n=0,\,1,\,2,\,\dots$. Then we multiply the first equation in~\eqref{eq:cons.law} by a sufficiently smooth test function $\psi:I\to\R$ with $\psi(\pm\pi+\alpha_d)=0$ and integrate over $I\times [t_n,\,t_{n+1}]$ using integration by parts with respect to $\theta$:
$$ \int_I\psi(\theta)f(t_{n+1},\,\theta)\,d\theta=\int_I\psi(\theta)f(t_n,\,\theta)\,d\theta
	+\int_{t_n}^{t_{n+1}}\int_I\psi'(\theta)H[f](t,\,\theta)f(t,\,\theta)\,d\theta\,dt. $$
A first order approximation of the integral in time at the right-hand side gives
$$ \int_I\psi(\theta)f(t_{n+1},\,\theta)\,d\theta=
	\int_I\Bigl(\psi(\theta)+\psi'(\theta)H[f](t_n,\,\theta)\Delta{t}\Bigr)f(t_n,\,\theta)\,d\theta+o(\Delta{t}), $$
whence, in view of the expansion $\psi(\theta)+\psi'(\theta)H[f](t_n,\,\theta)\Delta{t}=\psi\bigl(\theta+H[f](t_n,\,\theta)\Delta{t}\bigr)+o(\Delta{t})$,
\begin{equation}
	\int_I\psi(\theta)f(t_{n+1},\,\theta)\,d\theta=
		\int_I\psi\left(\Phi^n[f](\theta)\right)f(t_n,\,\theta)\,d\theta+o(\Delta{t}),
	\label{eq:SL.pushfwd}
\end{equation}
where $\Phi^n[f](\theta):=\theta+H[f](t_n,\,\theta)\Delta{t}$ is the so-called \emph{(discrete-in-time) flow map}. In the case of~\eqref{eq:H[f]} it results
$$ \Phi^n[f](\theta)=\theta+\left[\rho(\alpha_d-\theta)
	+a(\rho)(\alpha_c-\alpha_d+\theta)\int_I\cG(\abs{\theta-\varphi})f(t_n,\,\varphi)\,d\varphi\right]\Delta{t}. $$

Next we introduce a grid in $I$ by means of the points $\theta_i:=\theta_0+i\Delta{\theta}$, $i=1,\,\dots,\,M$, where $\theta_0$ is any point in $I$ and $\Delta{\theta}:=\frac{2\pi}{M+1}$ is chosen in such a way that $\theta_{M+1}=\theta_0\mod{2\pi}$. We define the pairwise disjoint grid cells $E_i:=\left[\theta_i-\frac{\Delta{\theta}}{2},\,\theta_i+\frac{\Delta{\theta}}{2}\right)$ and the cell averages of the function $f$ at time $t_n$:
$$ f_i^n:=\frac{1}{\Delta{\theta}}\int_{E_i}f(t_n,\,\theta)\,d\theta, \qquad i=1,\,\dots,\,M, $$
which are such that $f_i^n\approx f(t_n,\,\theta_i)$ for $\Delta{\theta}$ sufficiently small. Over such a grid we consider a Finite-Element-type approximation of the integrals appearing in~\eqref{eq:SL.pushfwd} and, at the same time, we neglect the remainder $o(\Delta{t})$ while still enforcing the equality between the left and right-hand sides to get:
$$ \sum_{i=1}^M\psi(\theta_i)f_i^{n+1}=\sum_{i=1}^{M}\psi(\Phi^n[f](\theta_i))f_i^n, \qquad n=0,\,1,\,2,\,\dots $$

Choosing as test functions in this equation the $\mathbb{P}_1$-basis functions $\{\beta_i(\theta)\}_{i=1}^M$ associated with the nodes $\{\theta_i\}_{i=1}^M$:
$$ 	\beta_i(\theta):=\left(1-\dfrac{\abs{\theta-\theta_i}}{\Delta{\theta}}\right)\chi_{[\theta_{i-1},\,\theta_{i+1}]}(\theta),
	\qquad
	\beta_i(\theta_j)=
	\begin{cases}
		1 & \text{if } j=i \\
		0 & \text{otherwise,}
	\end{cases}
$$
we obtain the explicit-in-time scheme
$$ f_i^{n+1}=\sum_{j=1}^M\beta_i(\Phi^n[f](\theta_j))f_j^n, \qquad i=1,\,\dots,\,M,\ n=0,\,1,\,2,\,\dots $$
with initial data
$$ f_i^0:=f_0(\theta_i)\approx\frac{1}{\Delta{\theta}}\int_{E_i}f_0(\theta)\,d\theta, \qquad i=1,\,\dots,\,M, $$
where $f_0$ is the initial condition prescribed in~\eqref{eq:cons.law}. In case of~\eqref{eq:H[f]}, the value $\Phi^n[f](\theta_j)$ is approximated as
$$ \Phi^n[f](\theta_j)\approx\theta_j+\left[\rho(\alpha_d-\theta_j)
	+a(\rho)(\alpha_c-\alpha_d+\theta_j)\sum_{k=1}^M\cG(\abs{\theta_j-\theta_k})f_k^n\Delta{\theta}\right]\Delta{t}. $$

\section*{Acknowledgments}
A.F. and M.-T.W. have been supported by the New Frontiers Grant NST 0001 of the Austrian Academy of Sciences.

A.T. is member of GNFM (Gruppo Nazionale per la Fisica Matematica) of INdAM (Istituto Nazionale di Alta Matematica), Italy.

A.T. acknowledges that this work has been supported by a starting grant of the Compagnia di San Paolo (Turin, Italy).

\bibliographystyle{amsplain}
\bibliography{references}

\providecommand{\bysame}{\leavevmode\hbox to3em{\hrulefill}\thinspace}
\providecommand{\MR}{\relax\ifhmode\unskip\space\fi MR }
% \MRhref is called by the amsart/book/proc definition of \MR.
\providecommand{\MRhref}[2]{%
  \href{http://www.ams.org/mathscinet-getitem?mr=#1}{#2}
}
\providecommand{\href}[2]{#2}
\begin{thebibliography}{10}

\bibitem{agnelli2015M3AS}
J.~P. Agnelli, F.~Colasuonno, and D.~Knopoff, \emph{A kinetic theory approach
  to the dynamics of crowd evacuation from bounded domains}, Math. Models
  Methods Appl. Sci. \textbf{25} (2015), no.~1, 109--129.

\bibitem{albi2016SIAP}
G.~Albi, M.~Bongini, E.~Cristiani, and D.~Kalise, \emph{Invisible control of
  self-organizing agents leaving unknown environments}, SIAM J. Appl. Math.
  \textbf{76} (2016), no.~4, 1683--1710.

\bibitem{albi2013MMS}
G.~Albi and L.~Pareschi, \emph{Binary interaction algorithms for the simulation
  of flocking and swarming dynamics}, Multiscale Model. Simul. \textbf{11}
  (2013), no.~1, 1--29.

\bibitem{albi2014PTRSA}
G.~Albi, L.~Pareschi, and M.~Zanella, \emph{Boltzmann-type control of opinion
  consensus through leaders}, Phil. Trans. R. Soc. A \textbf{372} (2014),
  no.~2028, 20140138/1--18.

\bibitem{ambrosio2008BOOK}
L.~Ambrosio, N.~Gigli, and G.~Savar{\'e}, \emph{Gradient flows in metric spaces
  and in the space of probability measures}, Lectures in Mathematics ETH
  Z\"urich, Birkh\"auser Verlag, Basel, 2008.

\bibitem{bobylev2000PRE}
A.~V. Bobylev and K.~Nanbu, \emph{Theory of collision algorithms for gases and
  plasmas based on the {B}oltzmann equation and the {L}andau-{F}okker-{P}lanck
  equation}, Phys. Rev. E \textbf{61} (2000), no.~4, 4576--4586.

\bibitem{borsche2015JNS}
R.~Borsche, R.~M. Colombo, M.~Garavello, and A.~Meurer, \emph{Differential
  equations modeling crowd interactions}, J. Nonlinear Sci. \textbf{25} (2015),
  no.~4, 827--859.

\bibitem{burger2016SIMA}
M.~Burger, S.~Hittmeir, H.~Ranetbauer, and M.-T. Wolfram, \emph{Lane formation
  by side-stepping}, SIAM J. Math. Anal. \textbf{48} (2016), no.~2, 981--1005.

\bibitem{burstedde2001PHYSA}
C.~Burstedde, K.~Klauck, A.~Schadschneider, and J.~Zittartz, \emph{Simulation
  of pedestrian dynamics using a two-dimensional cellular automaton}, Phys. A
  \textbf{295} (2001), no.~3-4, 507--525.

\bibitem{canuto2012SICON}
C.~Canuto, F.~Fagnani, and P.~Tilli, \emph{An {E}ulerian approach to the
  analysis of {K}rause's consensus models}, SIAM J. Control Optim. \textbf{50}
  (2012), no.~1, 243--265.

\bibitem{carlini2016DGA}
E.~Carlini, A.~Festa, F.~J. Silva, and M.-T. Wolfram, \emph{A
  {S}emi-{L}agrangian scheme for a modified version of the {H}ughes' model for
  pedestrian flow}, Dyn. Games Appl. (2016), doi:10.1007/s13235-016-0202-6.

\bibitem{carlini2016IFAC}
E.~Carlini and F.~J. Silva, \emph{A {S}emi-{L}agrangian scheme for the
  {F}okker-{P}lanck equation}, IFAC-PapersOnLine \textbf{49} (2016), no.~8,
  272--277.

\bibitem{colombi2016JCSMD}
A.~Colombi, M.~Scianna, and A.~Tosin, \emph{Moving in a crowd: Human perception
  as a multiscale process}, J. Coupled Syst. Multiscale Dyn. \textbf{4} (2016),
  no.~1, 25--29.

\bibitem{colombo2012M3AS}
R.~M. Colombo, M.~Garavello, and M.~L\'{e}cureux-Mercier, \emph{A class of
  nonlocal models for pedestrian traffic}, Math. Models Methods Appl. Sci.
  \textbf{22} (2012), no.~4, 1150023 (34 pages).

\bibitem{cristiani2010CHAPTER}
E.~Cristiani, B.~Piccoli, and A.~Tosin, \emph{Modeling self-organization in
  pedestrians and animal groups from macroscopic and microscopic viewpoints},
  Mathematical {M}odeling of {C}ollective {B}ehavior in {S}ocio-{E}conomic and
  {L}ife {S}ciences (G.~Naldi, L.~Pareschi, and G.~Toscani, eds.), Modeling and
  Simulation in Science, Engineering and Technology, Birkh\"{a}user, Boston,
  2010, pp.~337--364.

\bibitem{cristiani2011MMS}
\bysame, \emph{Multiscale modeling of granular flows with application to crowd
  dynamics}, Multiscale Model. Simul. \textbf{9} (2011), no.~1, 155--182.

\bibitem{cristiani2014BOOK}
\bysame, \emph{Multiscale {M}odeling of {P}edestrian {D}ynamics}, MS\&A:
  Modeling, Simulation and Applications, vol.~12, Springer International
  Publishing, 2014.

\bibitem{crouseilles2010JCP}
N.~Crouseilles, M.~Mehrenberger, and E.~Sonnendr\"{u}cker, \emph{Conservative
  semi-{L}agrangian schemes for {V}lasov equations}, J. Comput. Phys.
  \textbf{229} (2010), no.~6, 1927--1953.

\bibitem{degond2013KRM}
P.~Degond, C.~Appert-Rolland, J.~Pettr\'{e}, and G.~Theraulaz,
  \emph{Vision-based macroscopic pedestrian models}, Kinet. Relat. Models
  \textbf{6} (2013), no.~4, 809--839.

\bibitem{festa2015CDC}
A.~Festa and M.-T. Wolfram, \emph{Collision avoidance in pedestrian dynamics},
  Proceedings of the 54th {IEEE} {C}onference on {D}ecision and {C}ontrol
  (Osaka, Japan), December 2015, pp.~3187--3192.

\bibitem{helbing2005TS}
D.~Helbing, L.~Buzna, A.~Johansson, and T.~Werner, \emph{Self-organized
  pedestrian crowd dynamics: {E}xperiments, simulations, and design solutions},
  Transport. Sci. \textbf{39} (2005), no.~1, 1--24.

\bibitem{helbing1995PRE}
D.~Helbing and P.~Moln\'ar, \emph{Social force model for pedestrian dynamics},
  Phys. Rev. E \textbf{51} (1995), no.~5, 4282--4286.

\bibitem{hoogendoorn2004}
S.P. Hoogendoorn and P.H.L. Bovy, \emph{Pedestrian route-choice and activity
  scheduling theory and models}, Transport. Res. B-Meth. \textbf{38} (2004),
  no.~2, 169--190.

\bibitem{huang2016JCP}
C.-S. Huang, T.~Arbogast, and C.-H. Hung, \emph{A semi-{L}agrangian finite
  difference {WENO} scheme for scalar nonlinear conservation laws}, J. Comput.
  Phys. \textbf{322} (2016), 559--585.

\bibitem{karamouzas2014PRL}
I.~Karamouzas, B.~Skinner, and S.~J. Guy, \emph{Universal power law governing
  pedestrian interactions}, Phys. Rew. Lett. \textbf{113} (2014), no.~23,
  238701/1--5.

\bibitem{klar2000BOOKCH}
A.~Klar and R.~Wegener, \emph{Kinetic traffic flow models}, Modeling in Applied
  Sciences: A Kinetic Theory Approach (N.~Bellomo and M.~Pulvirenti, eds.),
  Birkh\"{a}user Boston, 2000, pp.~263--316.

\bibitem{motsch2014SIREV}
S.~Motsch and E.~Tadmor, \emph{Heterophilious dynamics enhances consensus},
  SIAM Rev. \textbf{56} (2014), no.~4, 577--621.

\bibitem{moussaid2010PLOS}
M.~Moussa\"id, N.~Perozo, S.~Garnier, D.~Helbing, and G.~Theraulaz, \emph{The
  walking behaviour of pedestrian social groups and its impact on crowd
  dynamics}, PLoS One \textbf{5} (2010), no.~4, e10047.

\bibitem{pareschi2013BOOK}
L.~Pareschi and G.~Toscani, \emph{Interacting {M}ultiagent {S}ystems: {K}inetic
  equations and {M}onte {C}arlo methods}, Oxford University Press, 2013.

\bibitem{piccoli2013AAM}
B.~Piccoli and F.~Rossi, \emph{Transport equation with nonlocal velocity in
  {W}asserstein spaces: convergence of numerical schemes}, Acta Appl. Math.
  \textbf{124} (2013), no.~1, 73--105.

\bibitem{piccoli2009ENCYCLOPEDIA}
B.~Piccoli and A.~Tosin, \emph{Vehicular traffic: {A} review of continuum
  mathematical models}, Encyclopedia of Complexity and Systems Science (R.~A.
  Meyers, ed.), vol.~22, Springer, New York, 2009, pp.~9727--9749.

\bibitem{polus1983JTE}
A.~Polus, J.~Schofer, and A.~Ushpiz, \emph{Pedestrian flow and level of
  service}, J. Transp. Eng. \textbf{109} (1983), no.~1, 46--56.

\bibitem{schadschneider2011NHM}
A.~Schadschneider and A.~Seyfried, \emph{Empirical results for pedestrian
  dynamics and their implications for modeling}, Netw. Heterog. Media
  \textbf{6} (2011), no.~3, 545--560.

\bibitem{toscani2006CMS}
G.~Toscani, \emph{Kinetic models of opinion formation}, Comm. Math. Sci.
  \textbf{4} (2006), no.~3, 481--496.

\bibitem{tosin2011NHM}
A.~Tosin and P.~Frasca, \emph{Existence and approximation of probability
  measure solutions to models of collective behaviors}, Netw. Heterog. Media
  \textbf{6} (2011), no.~3, 561--596.

\bibitem{twarogowska2014AMM}
M.~Twarogowska, P.~Goatin, and R.~Duvigneau, \emph{Macroscopic modelling and
  simulations of room evacuation}, Appl. Math. Model. \textbf{38} (2014),
  no.~24, 5781--5795.

\bibitem{villani2002BOOKCH}
C.~Villani, \emph{A review of mathematical topics in collisional kinetic
  theory}, Handbook of mathematical fluid dynamics (S.~Friedlander and
  D.~Serre, eds.), vol.~I, Elsevier, 2002, pp.~71--305.

\bibitem{weidmann1992TECHREP}
U.~Weidmann, \emph{Transporttechnik der {F}ussg\"{a}nger}, Tech. report, ETH,
  Z\"{u}rich, 1992.

\bibitem{zhang2014PHYSA}
J.~Zhang and A.~Seyfried, \emph{Comparison of intersecting pedestrian flows
  based on experiments}, Phys. A \textbf{405} (2014), 316--325.

\end{thebibliography}

\end{document}